\documentclass[12pt]{article}

\usepackage[T1]{fontenc}
\usepackage[utf8]{inputenc}
\usepackage[a4paper,margin=3cm]{geometry}
\usepackage{bbm}
\usepackage[colorlinks=true,urlcolor=blue,linkcolor=blue,citecolor=blue]{hyperref}
\usepackage{soul}
\usepackage{marvosym}
\usepackage{amsfonts}
\usepackage{xspace}
\usepackage{color}
\usepackage{booktabs}
\usepackage{xtab}
\usepackage{latexsym}
\usepackage[sc]{mathpazo}
\usepackage{xfrac}
\usepackage{mathtools}
\usepackage{amssymb}
\usepackage{comment}
\usepackage{tikz} 
\usetikzlibrary{calc}
\usepackage{tikz-3dplot} 
\usepackage{subcaption} 

\newcommand{\mfrac}[2]{\mbox{\footnotesize$\displaystyle\frac{#1}{#2}$}}
\newcommand{\diff}{\mathrm{d}}
\newcommand{\ds}{\diff s}
\newcommand{\dt}{\diff t}

\newcommand{\dd}[2][{}]{\frac{\!\diff#1}{\!\diff#2}}

\newcommand{\dds}{\dd{s}}

\makeatletter
\newcommand{\Jac}{\mathop{\operator@font Jac}}
\newcommand{\rang}{\mathop{\operator@font rang}}
\newcommand{\sol}{\mathop{\operator@font sol}}
\newcommand{\argmin}{\mathop{\operator@font argmin}}
\makeatother
\newcommand{\tendsto}{\rightarrow}

\newcommand{\To}{\longrightarrow}
\newcommand{\quand}{\quad\text{and}\quad}

\newcommand{\coloneq}{\vcentcolon\mathrel{\mkern -1.2mu}\mathrel{=}}
\newcommand{\spacetime}{(M,\Omega,g,\nabla)}

\DeclareMathOperator{\pt}{\partial_t}

\DeclareMathOperator{\An}{\text{An}\Omega}

\DeclareMathOperator{\grad}{grad}
\DeclareMathOperator{\rot}{Rot}
\DeclareMathOperator{\tor}{Tor}


\usepackage{amsmath,amssymb,amsthm,mathrsfs}


\allowdisplaybreaks[2]

\usepackage[english]{babel}

\newtheorem{defi}{Definition}[section]
\newtheorem{theor}[defi]{Theorem}
\newtheorem{pro}[defi]{Proposition}
\newtheorem{cor}[defi]{Corolary}
\newtheorem{lem}[defi]{Lemma}
\newtheorem{rem}[defi]{Remark}
\newtheorem{ex}[defi]{Example}

\def\R{\mathbb R}

\def\f{\mathcal F}

\def\x{\mathfrak X}

\def\U{\mathcal U}
\def\G{\mathcal G}

\let\olditemize\itemize
\def\itemize{\olditemize\itemsep=0pt }
\let\oldenumerate\enumerate
\def\enumerate{\oldenumerate\itemsep=0pt }


\title{On Twisted Spacetimes: a new class of Galilean cosmological models}
\author{%
    \textbf{Daniel de la Fuente} \\
    Departamento de Matemáticas\\
    Universidad de Oviedo, 33003 Gijón, Spain \\
    \texttt{fuentedaniel@uniovi.es}
    \and
    \textbf{Rafael María Rubio} \\
    Departamento de Matemáticas\\
    Universidad de Córdoba, 14071 Córdoba, Spain \\
    \texttt{rmrubio@uco.es}
    \and
    \textbf{Jose Torrente} \\
    Departamento de Matemáticas\\
    Universidad de Córdoba, 14071 Córdoba, Spain \\
    \texttt{jtorrente@uco.es}
    }
\date{February 2024}

\begin{document}

\maketitle

\begin{abstract} Within the generalized Newton-Cartan theory, Galilean Twisted spacetimes are introduced as dual models of the well-known relativistic twisted spacetimes. As a natural generalization, torqued vector fields in Galilean spacetimes are defined, showing that the local structure of a Galilean spacetime admitting a timelike torqued vector field is given by a Twisted spacetime. In addition, several results assuring the global splitting as Twisted spacetime are obtained. On the other hand, completeness of free falling observers is studied, as well as general geodesic completeness.
    
\end{abstract}

\section{Introduction}
The geometric formulation of Newton's theory of gravity began with E. Cartan in the early twentieth century \cite{Cartan1, Cartan2}. But it was only in the second half of the century that the Newton-Cartan theory was given significant attention by various authors.
In the 1960s Malament studied the Lorentzian and Galilean groups, proving that Newtonian gravity constitutes a limit of General Relativity and how free particles are represented by geodesic curves of certain connections in both theories.
Additionally, Trautman showed the existence of a symmetric connection in both theories \cite{Trautman}, arising due to the local nature of physical laws and simultaneously encoding the inertia principle and gravitational forces. Concretely, classical Newtonian gravitation is formulated as a covariant theory, showing that certain results previously considered characteristic or singular of the theory of Relativity are also shared by the (geometric) Newton-Cartan Gravitational Theory. In fact, Newtonian gravity is explained by the curvature of a connection in the spacetime, although in this case it no longer arises from any semi-Riemannian metric. Furthermore, in the geometric formulation of Newtonian Gravity Theory, the spacetime structure is dynamic in the sense that it participates in the development of physics rather than being a fixed backdrop (see \cite{DB} and references therein).

Later in that century, several steps were taken towards the generalization of this theory, such as the introduction of new Newtonian models satisfying the cosmological principle \cite{Muler1983}, obtaining the Galilean analogue of relativistic Robertson-Walker spacetimes. In \cite{Duval}, Newtonian theory achieves the status of a gauge theory and is shown to be a limit of the Einstein-Klein-Gordon theory. This view of the theory as a limit of General Relativity continued to develop during these years \cite{Leih}, concluding also that the results obtained from Newton-Cartan cosmology were similar on not too large scales to those derived from Einstein's theory \cite{Rod, Ruede}.

In the last few decades, the simplicity of certain Newtonian models in Newton-Cartan theory, in contrast to the computational complexity of General Relativity, has led to new applications of this non-relativistic theory in Cosmology \cite{Bra}, Hydrodynamics \cite{Geracie2}, quantum collapse \cite{Pen}, AdS/CFT correspondence \cite{Lin}, condensed matter systems \cite{Brau}, quantum Hall effect \cite{Gera} and other related phenomena. The many applications of Newton-Cartan theory have motivated a review of the geometric structures associated with generalized Newton-Cartan theory \cite{Bernal} and new extensions such as Newton-Cartan Supergravity \cite{Roel3} and non-relativistic Strings \cite{Roel, Berg}. Furthermore, new Newtonian models have been obtained, including Galilean Generalized Robertson-Walker spacetimes \cite{GGRW}, the Newtonian G{\"o}del spacetime \cite{Costa}, standard stationary Galilean spacetimes \cite{GFPR}, as well as the embedding of these geometries in 5-dimensional Lorentzian manifolds \cite{Beka,Minguzzi2006a, Minguzzi2007, Minguzzi2010}.
 
In this manuscript, a wide family of Galilean spacetimes is presented, the Galilean Twisted spacetimes (Sect. \ref{GT}), which constitute a generalization of the classical Galilean Robertson-Walker spacetimes (RW), and even of the Galilean Generalized Robertson-Walker spacetimes (GGRW). In fact, Definition \ref{defi:GT} states 
\begin{quote}
\upshape
Given a real interval $I\subseteq\R$, a Riemannian manifold $(F,h)$ and a positive smooth function $f\in C^\infty(I\times F)$, a Galilean Twisted (GT) spacetime is given by the tuple $(I\times F, \Omega=\dt, g=f^2 \, \pi_F^* h, \nabla)$ where $t\coloneq \pi_I$ and $\nabla$ is the only symmetric Galilean connection on $I\times F$ such that $\nabla_{\pt} \pt = 0$ and $\rot \pt = 0$.
\end{quote}
According to the cosmological principle, RW spacetimes model a spatially homogeneous and isotropic universe composed of matter (galaxies) subject only to the action of its own gravity. Thus, the ``absolute space at each moment in absolute time'' must be locally Euclidean with constant sectional curvature. Furthermore, as large-scale experimental measurements (redshift) indicate, the expansion is assumed to be homogeneous, i.e., the expansion rate is the same at all points at a fixed (absolute) time.  
However, while the hypothesis of spatial homogeneity and isotropy may be a reasonable first approximation to the large scale structure of the universe, it could not be appropriate when we consider a more accurate scale. This led to the introduction of the family of GGRWs spacetimes, where the absolute space is modelled by an arbitrary Riemannian manifold, which does not necessarily have to be of constant curvature. Finally, the Twisted spacetimes presented here assume a non-homogeneous expansion rate, which means that it depends not only on the moment in time but also on the point in space. In other words, the ``Hubble constant'' is no longer constant.

Twisted products have been studied in the relativistic setting \cite{Chen1979, Chen2017-warped, ManticaMolinari2017-survey, Soria2023}. They have been analyzed in relation to geometric models of fluids, resulting in a characterization in terms of the stress-energy tensor of an imperfect fluid \cite{ManticaMolinari2017-torseforming} and obtaining the properties of static perfect fluids in Twisted Lorentzian spacetimes \cite{Guler2023}. Similar results could be explored in the non-relativistic setting introduced here. 

A Galilean Twisted (GT) spacetime possesses an infinitesimal symmetry given by the existence of a timelike irrotational and (spatially) conformally Leibnizian field of observers.  Several geometrical properties, as well as physical interpretations of this family of spacetimes, are given in Sections \ref{sect4} and \ref{sect-phys}. We concretely focus on the completeness of its free falling observers. Our main result (Theorem \ref{theor:completeness}) asserts
\begin{quote} 
{\emph Let $(F,h)$ be a complete Riemannian manifold and let $f\in C^\infty(\R \times F)$ be a positive function. If $\inf f >0$ and $\inf (f_t/f)>-\infty$ along finite times, then the GT spacetime $(M=\R\times F, \Omega =\diff \pi_\R, g=f^2\, \pi_F^* h, \nabla)$ is geodesically complete.}
\end{quote}
From a physical perspective, this result is relevant since the completeness of the timelike trajectories is crucial in order to have a consistent theory (without observers or with observers that disappear or appear suddenly, it would not be possible to do Physics).

Section \ref{sect5} is devoted to the study of Galilean spacetimes admitting a torqued vector field (see Definition \ref{defi:torqued}), which is closely related to the previously introduced family of irrotational conformally Leibnizian vector fields \cite{GGRW}. The principal result in this section is Theorem \ref{theor:local structure torqued}, which gives
\begin{quote}
    Let $\spacetime$ be a Galilean spacetime.
    If it admits a torqued vector field $K\in \Gamma(TM)$, then for each $p\in M$, there exists an open neighborhood of p, $\U$, and a Galilean diffeomorphism $\Psi\colon N \to \U$, where $N$ is a GT spacetime. 
\end{quote}

Finally, Section \ref{sect6} is devoted to face some splitting type problems, i.e., under which geometrical assumptions a Galilean spacetime decomposes globally into a Galilean Twisted spacetime. Theorem \ref{theor:splitting global} states
\begin{quote}
\upshape
    Let $\spacetime$ be a simply connected Galilean spacetime.
    It admits a global decomposition as a GT spacetime if and only if there exists a Galilean torqued vector field $K\in \Gamma(TM)$ such that the associated field of observers, $Z\coloneq K/\Omega(K)$, is a uniform global generator.
\end{quote}
Moreover, if the leaves of the foliation induced by $\Omega$ are compact, we can ensure again a global splitting result (Theorem \ref{theor:splitting global compact}).

\section{Premilinaries}
Let $M$ be a $(n+1)$-dimensional smooth manifold. As usual, it is assumed connected, Hausdorff and paracompact.
A \textit{Leibnizian spacetime} \cite{Bernal} is a triad $(M,\Omega, g)$ where $(\Omega, g)$ is a {\it Leibnizian structure} defined by a global non-vanishing one-form on $M$, $\Omega\in \Lambda^1(M)$, and a positive definite metric $g$ on the vector bundle defined by the kernel of $\Omega$. Concretely, if $\An \coloneq \{ v\in TM \; : \; \Omega(v) = 0 \}$ is the n-distribution induced by $\Omega$, then $g$ is a smooth, bilinear, symmetric and positive definite 2-covariant tensor
\begin{equation*}
    g\colon \Gamma(\An) \times \Gamma(\An) \to C^\infty(M)\,,
\end{equation*}
where $\Gamma(\An)$ denote the subset of sections $V\in \Gamma(TM)$ such that $V_p\in \An$, $\forall p\in M$. 

Given a Leibnizian spacetime $(M,\Omega,g)$, the points $p\in M$ are called \textit{events}. If $p\in M$, $\Omega_p$ is known as the \textit{absolute clock} at $p$ and $\big(\text{An} (\Omega_p), g_p\big)$ is called the \textit{absolute space} at $p$ (see for example \cite{Bernal2}). Adopting relativistic terminology, a tangent vector $v\in T_pM$ is said to be spacelike (resp. timelike) if $\Omega_p(v)=0$ (resp. $\Omega_p(v)\ne 0$). Moreover, a timelike vector $v$ is called {\it future-pointing} (resp. {\it past-pointing}) if $\Omega_p(v)>0$ (resp. $\Omega_p(v)<0$). 

On the other hand, an {\it observer} is defined as a smooth curve $\gamma\colon I\subseteq \R \to M$, being $I$ an interval, such that its velocity is a normalized timelike future-pointing vector ({\it standard timelike unit} \cite{Bernal2}), i.e., $\Omega_{\gamma(s)}(\gamma'(s)) = 1,\, \forall s\in I$. The parameter $s\in I$ is the {\it proper time} of the observer $\gamma$. A \textit{field of observer} $Z\in \Gamma(TM)$ is a smooth vector field whose integral curves are observers, equivalently, $\Omega(Z)=1$. 

In this Leibnizian setting, the synchronizability of observers is intrinsic to the Leibnizian structure, in contrast to the  relativistic notions, which is relative to each field of observers. In fact, if $\Omega \wedge \diff \Omega = 0$ (i.e., the smooth distribution $\An$ is integrable), the spacetime is called {\it locally synchronizable}. In this case, Frobenius Theorem (see \cite{Warner1983}) states that there exists a foliation consisting of spacelike hypersurfaces. Moreover,  locally $\Omega = \lambda \diff T$, for certain smooth functions  $\lambda>0$ and $T$, and the hypersurfaces of the foliation are exactly $\{T=\text{constant}\}$. Thus, every observer can locally synchronize its proper time with the so-called "compromise time" $T$. If the absolute clock is closed, $\diff \Omega=0$, the Leibnizian spacetime is said to be \textit{proper time locally synchronizable}. Under this assumption, locally we can express $\Omega = \diff T$, for certain smooth function $T$, and the observers can synchronize their proper times up to a constant. In the case $\Omega=\diff T$, for $T\in C^\infty(M)$, we will suppose every observer is parameterized by the \textit{absolute time} $T$. Notice that, if $M$ is simply-connected and $\Omega$ is closed, we can ensure the existence of an absolute time function making use of the Poincaré Lemma (see, for instance, \cite{Spivak1}).

Let $(M,\Omega, g)$ and $(\widetilde{M}, \widetilde{\Omega}, \widetilde{g})$ be two Leibnizian spacetimes. A diffeomorphism $\lambda\colon M \to \widetilde{M}$ is called \textit{Leibnizian} if it preserves the absolute clock and space, that is, $\lambda^* \widetilde{\Omega} = \Omega$ and $\lambda^*\widetilde{g} = g$. 

To give complete physical meaning to a Leibnizian spacetime, the principle of inertia has to be included via an affine connection $\nabla$. It is necessary to consider $\nabla$ compatible with the absolute clock $\Omega$ and the absolute space $(\An, g)$, i.e., 
\begin{enumerate}
    \item[(1)] $\nabla \Omega = 0$, equivalently, $X(\Omega(Y)) = \Omega(\nabla_X Y)$, for all $X, Y \in \Gamma(TM)$,
    \item[(2)] $\nabla g= 0$, that is, $Z(g(V,W)) = g(\nabla_Z V, W) + g(V, \nabla_Z W)$, for all $Z\in \Gamma(TM)$ and $V,W\in \Gamma(\An)$.
\end{enumerate}

Note that, thanks to condition (1), the right-hand side of item (2) is well defined.
Any such $\nabla$ is called a \textit{Galilean connection}. Consequently, the Leibnizian spacetime does not determine a canonical Galilean connection. The tuple $\spacetime$ is known as a \textit{Galilean spacetime}. Recall that if its torsion tensor vanishes, a connection is said to be \textit{symmetric}, i.e.,
$$\text{Tor}(X,Y) = \nabla_X Y - \nabla_Y X - [X,Y] = 0, \quad \forall X, Y \in \Gamma(TM).$$

From a physical viewpoint, it is often convenient to choose a symmetric connection because it is determined by the associated geodesics \cite[Add. 1, Ch. 6]{Spivak2}. Observe that the closeness of the absolute clock and the torsion tensor are related \cite[Lemma 13]{Bernal}
\begin{equation}\label{eq:torsion and clock}
    \Omega \circ \tor = \diff \Omega\,.
\end{equation}
From now on, we will only consider symmetric Galilean connections, then using equation (\ref{eq:torsion and clock}), $\diff \Omega = 0$.

Given two Galilean spacetimes $\spacetime$ and $(\widetilde{M}, \widetilde{\Omega}, \widetilde{g}, \widetilde{\nabla})$, a diffeomorphism $\lambda\colon M \to \widetilde{M}$ is called {\it Galilean} if it is Leibnizian and additionally $\lambda^* \widetilde{\nabla} = \nabla$, i.e., $\widetilde{\nabla}_{\diff\lambda(X)} \diff \lambda(Y) = \nabla_X Y$, for all $X,Y\in \Gamma(TM)$.

On the other hand, given a Galilean spacetime $\spacetime$ and a field of observers $Z\in \Gamma(TM)$, it is well-known that the Galilean connection $\nabla$ can be characterized using $Z$. The \textit{gravitational field} induced by $\nabla$ in  $Z$ is the spacelike vector field $\G =\nabla_Z Z$ and similarly, the \textit{vorticity} or \textit{Coriolis field} is the 2-form $\omega = \tfrac12 \rot(Z)$. The space of symmetric Galilean connections is mapped bijectively onto $\Gamma(\An) \times \Lambda^2(\An)$, giving $\nabla \mapsto (\G_Z, \omega_Z)$ (see \cite[Cor. 28]{Bernal}). 
Moreover, the converse is obtained through a formula 'à la Koszul'. If we consider
$$
\nabla_X Y=P^Z\left(\nabla_X Y\right)+X(\Omega(Y)) Z, \quad \forall X, Y \in \Gamma(T M)
$$
where $P^Z X=X-\Omega(X) Z$ is the natural spacelike projection for $Z$, then for each $V \in \Gamma(\operatorname{An}(\Omega))$ \cite[eq. (13)]{Bernal},
\begin{equation}\label{eq:Koszul}
\footnotesize
\begin{aligned}
2 g\left(P^Z\left(\nabla_X Y\right), V\right)= & X\left(g\left(P^Z Y, V\right)\right)+Y\left(g\left(P^Z X, V\right)\right)-V\left(g\left(P^Z X, P^Z Y\right)\right) \\
& +2 \Omega(X) \Omega(Y) g\left(\mathcal{G}^Z, V\right) +2 \Omega(X) \,\omega_{Z}(P^Z Y, V)+2 \Omega(Y)\, \omega_{Z}(P^Z X, V) \\
& +\Omega(X)\left(g\left(\big[Z, P^Z Y\big], V\right)-g\left([Z, V], P^Z Y\right)\right) \\
& -\Omega(Y)\left(g\left(\big[Z, P^Z X\big], V\right)+g\left([Z, V], P^Z X\right)\right) \\
& +g\left(\big[P^Z X, P^Z Y\big], V\right)-g\left(\big[P^Z Y, V\big], P^Z X\right) -g\left(\big[P^Z X, V\big], P^Z Y\right) .
\end{aligned}
\end{equation}

\subsection{Spatially conformally Leibnizian spacetimes}
\label{sescls}
Let us recall the notion of spatially conformally Leibnizian vector field, which appears in several classes of cosmological models in the context of the generalized Newton-Cartan theory (see \cite{GGRW}) and will be key to obtaining some of our results.
\begin{defi}
Let $(M,\Omega,g)$ be a Leibnizian spacetime and $K$ a vector field satisfying 
\begin{equation}\label{wdefi}
\Omega([K,V])=0\quad\mathrm{for}\;\mathrm{all}\quad V\in\Gamma(\mathrm{An}(\Omega)).
\end{equation}
 The vector field $K$ is called \emph{spatially conformally Leibnizian} if the Lie derivative of the  absolute space metric satisfies
\begin{equation}\label{conforme}
{\cal L}_{_K}g=2\rho\,g,
\end{equation}
for some smooth function $\rho\in C^{\infty}(M)$. The function $\rho$ is called \textit{conformal factor} of $g$ associated to $K$.
\end{defi} 

Notice that condition (\ref{wdefi}) ensures that the previous notion is well defined. As a direct consequence, we have
\begin{equation}
K(g(V,W))=2\rho\,g(V,W)+g([K,V],W)+g([K,W],V),\,
\end{equation}
for all $V,W\in\Gamma(\An)$.

The next result provides another condition to guarantee that \eqref{wdefi} holds for a vector field.
\begin{pro}\label{pro:spatially invariant}
Let  $(M,\Omega,g,\nabla)$ be a Galilean spacetime with symmetric connection $\nabla$. Then, a vector field $K$ satisfies equation {\rm (\ref{wdefi})} if and only if the function $\Omega(K)$ is spatially invariant, i.e., $V(\Omega(K))=0$, $\forall V\in\Gamma\big(\mathrm{An}(\Omega)\big)$.
\end{pro}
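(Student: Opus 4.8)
The plan is to exploit the two defining properties of a symmetric Galilean connection—the compatibility $\nabla\Omega = 0$ (property (1)) and the vanishing of the torsion—to convert the Lie bracket appearing in \eqref{wdefi} into covariant derivatives, on which the absolute clock acts in a controlled way.

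First I would fix an arbitrary $V\in\Gamma(\An)$, so that $\Omega(V)=0$ by the very definition of the distribution, and use the symmetry of $\nabla$ to rewrite the bracket as $[K,V]=\nabla_K V-\nabla_V K$. Applying $\Omega$ to this identity and invoking property (1) in the form $\Omega(\nabla_X Y)=X(\Omega(Y))$, I would obtain
$$\Omega([K,V])=K(\Omega(V))-V(\Omega(K)).$$
The key simplification is then that $\Omega(V)=0$ holds identically, so the first term vanishes and one is left with the single clean identity $\Omega([K,V])=-V(\Omega(K))$.

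From this both implications follow at once: $\Omega([K,V])=0$ for every $V\in\Gamma(\An)$ is equivalent to $V(\Omega(K))=0$ for every such $V$, which is exactly the spatial invariance of the function $\Omega(K)$. There is no genuine obstacle here; the entire content is carried by the step translating the bracket into covariant derivatives, which simultaneously uses torsion-freeness and clock-compatibility. The only point requiring care is to apply $\nabla\Omega=0$ to the appropriate slot and to use $\Omega(V)=0$ to discard the $K(\Omega(V))$ term, after which the equivalence is immediate.
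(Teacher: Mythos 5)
Your proof is correct: torsion-freeness gives $[K,V]=\nabla_K V-\nabla_V K$, clock-compatibility turns this into $\Omega([K,V])=K(\Omega(V))-V(\Omega(K))=-V(\Omega(K))$ since $\Omega(V)=0$, and the equivalence is immediate. The paper states this proposition without writing out a proof, and your argument is precisely the standard computation it implicitly relies on, so there is nothing to add.
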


Finally, $K\in \Gamma(TM)$ is called a {\it spatially Leibnizian} vector field if the conformal factor $\rho$ vanishes identically, or equivalently, if its local flows $\Phi_s$ preserve the absolute space, i.e.,  $\Phi_s^* g = g$. If, in addition, they also preserve the absolute clock, i.e., $\Phi_s^* \Omega = \Omega$, then $K$ is called \textit{Leibnizian}. 

\subsection{Spacelike differential map and spacelike gradient}
\label{subsection:gradient}
To conclude the preliminaries section, let us introduce some notions that will be used later. Given a smooth function $\varphi\in C^\infty (M)$ on a Leibnizian spacetime $(M, \Omega, g)$, we  define the \textit{spacelike gradient} of $\varphi$, as the section $\grad \varphi\in \Gamma(\An)$, determined by
\[
    g(\grad\varphi, V) = \diff \varphi(V) , \quad \forall V\in \Gamma(\An)\,.
\]
Implicitly $\Omega(\grad \varphi) = 0$. This notion is a generalization of the gradient in semi-Riemannian manifolds.

When a smooth manifold $M$ splits as a product $M=I\times F$, for $I\subseteq \R$ an open interval and $F$ certain smooth manifold, the differential of a function $\varphi\in C^\infty(I \times F)$ can be canonically split. We can identify $T_{(t,p)} (I\times F) \equiv T_t I \times T_p F \equiv \R \times T_p F$. Then, define 
\[
\begin{split}
    \varphi_t\in C^\infty(I\times F), \quad{}& \varphi_t (t,p) \coloneq \diff \varphi_{(t,p)}(\pt)\,,\\
    \diff^F \varphi_{(t,p)}\colon T_pF\to \R, \quad{}& \diff^F \varphi_{(t,p)}(v) \coloneq \diff \varphi_{(t,p)}(\overline{v})\,,
\end{split}
\]
where $\pt$ is the canonical vector field associated to the projection onto the first component $t=\pi_I$ and $\overline{v}\in \R\times T_pF$ is the horizontal lift of a vector $v\in T_pF$. If $(F,h)$ is a Riemannian manifold, the usual Riemannian gradient $\grad^h$ applies. Thus, we will denote $\diff^h = \diff^F$ to unify the notation although the differential map only depends on the differential structure. From now on, the horizontal lift of a vector field $W\in \Gamma(TF)$ to a product $I\times F$ will be denoted by $\overline{W} \in \Gamma(T(I\times F)) \equiv \Gamma(\R \times TF)$. Moreover, the induced differential on $F$ can be lifted as 
\[
    \overline{\diff^h} \varphi_{(t,p)} \colon \R \times T_pF \to \R, \quad \overline{\diff^h} \varphi_{(t,p)}(a\pt + \overline{v}) \coloneq \diff^h \varphi_{(t,p)}(v)\,,
\]
for all $a\in \R$ and $v\in T_pF$.

On the other hand, we say that $\varphi\in C^\infty(I\times F)$ is bounded from below (resp. from above) \textit{along finite times} if, $\forall [s_1,s_2]\subset I$, there exists a constant $C_{s_1,s_2}\in \R$ such that $\varphi >C_{s_1,s_2}$ (resp. $\varphi <C_{s_1,s_2}$) in $[s_1,s_2]\times F$. A smooth function $\varphi$ is \textit{bounded along finite times} if it is bounded from below and from above along finite times. Obviously, $\inf \varphi>A$, $A\in \R$, along finite times is equivalent to $\varphi$ being bounded from below along finite times for constants greater than $A$.

\section{Galilean Twisted Spacetimes}\label{GT}
We introduce a new class of Galilean spacetimes, which extends the family of cosmological models known as Galilean Generalized Robertson-Walker spacetimes (see \cite{GGRW,Muler1983}).

\begin{defi}\label{defi:GT}
    Let $I\subseteq \R$, $0\in I$, be an open real interval, $(F,h)$ be an n-dimensional connected Riemannian manifold, and $f\in C^{\infty}(I\times F)$ be a positive smooth function. A Galilean spacetime $(M,\Omega, g, \nabla)$ is called \emph{Galilean Twisted} (GT) spacetime if $M=I\times F$, $\Omega = \diff \pi_I$, $g$ is the restriction to the vector bundle $\An$ of the following (degenerate) metric on M:
    \begin{equation}
        \overline{g} = f^2 \pi_F^* h\,,
    \end{equation}
    where $\pi_I, \pi_F$ are the canonical projections onto the open interval I and the manifold $F$, respectively, and $\nabla$ is the only symmetric Galilean connection on $M$ such that
    \begin{equation}\label{eq:GT conditions}
        \nabla_{\partial_t} \partial_t=0 \quad \text { and } \quad \operatorname{Rot} \partial_t=0\,,
    \end{equation}
    where $\partial_t=\partial / \partial t$ is the global coordinate vector field associated with $t\coloneq\pi_I$.
\end{defi}

Abusing the notation, we will denote $g=f^2 \pi_F^* h$.
Analogously to the Riemannian twisted product, the Riemannian manifold $(F,h)$ is called fiber.

The vector field $\pt$ is a field of observers in $M$, and the observers in $\partial_t$ are called \textit{commovil observers} in analogy with both the Galilean and relativistic Generalized Robertson-Walker spacetimes \cite{AliasRomeroSanchez1995, GGRW}. Notice that the conditions (\ref{eq:GT conditions}) determine the Galilean connection $\nabla$ (see equation (\ref{eq:Koszul})).

\begin{ex}
    Consider the GT spacetime $(\R\times \R^n, \diff \pi_1, g=f^2 \pi_2^* \langle\,, \rangle, \nabla)$, where $\pi_1 \colon \R\times \R^n \to \R$ and $\pi_2\colon \R\times \R^n \to \R^n$ are the canonical projections onto the first and second factors, respectively, $\langle\,,\rangle$ is the usual Euclidean product on $\R^n$ and $f\in C^\infty(\R\times \R^n)$. If $f$ does not depend on the fiber $F=\R^n$, i.e.,  $f = f(t)$, $t\in \R$, then we have a GGRW spacetime \cite{GGRW}. In the simplest case where $f=constant$, the Galilean connection is just the standard flat connection of $\R^{n+1}$.  
\end{ex}

\section{Geodesic Completeness of GT spacetimes}\label{sect4}

In this section we are going to study the completeness of the geodesics in a GT spacetime. In particular, this will allow us to know if the free falling observers live forever.

\begin{rem}
\emph{
    From the compatibility condition of a Galilean connection $\nabla$ with the absolute clock $\Omega$ in a Galilean spacetime, it is clear that the geodesics $\gamma$ maintain their causal character. That is, $\Omega(\gamma')$ is constant along the trajectory of $\gamma$. Then, the two relevant cases to consider will be spacelike geodesics ($\Omega(\gamma')=0$) and free falling observers ($\Omega(\gamma')=1$).
    }
\end{rem}

Given a geodesic $\gamma: J\longrightarrow M$ in a GT spacetime $(M=I\times F, \Omega, g, \nabla)$, where $J\subset\R$ denote an interval in the real line, we will be able to derive the corresponding equation of its projection  $\pi_F(\gamma)$ on the fiber $F$. The following lemma gives us the necessary computations.
\begin{lem}
    Let $(M=I\times F, \Omega= \diff \pi_I, g=f^2 \pi_F^* h, \nabla)$ be a GT spacetime. If $W_1,W_2\in \Gamma(TF)$, then
    \begin{equation}\label{eq:GT connection}
    \begin{split}
        \nabla_{\pt} \overline{W} ={}& \nabla_{\overline{W}} \pt = u_t \overline{W}\,,\\
        \nabla_{\overline{W_1}} \overline{W_2} ={}& \overline{\nabla^h_{W_1} W_2} + \diff^h u(W_1) \overline{W_2} + \diff^h u(W_2) \overline{W_1} - h(W_1,W_2) \overline{\grad^h u}\,,
    \end{split} 
    \end{equation}
    where $u\coloneq\log f$ and $\grad^h u$ denotes the usual gradient on the Riemannian manifold $(F,h)$.
\end{lem}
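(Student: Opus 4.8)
The plan is to compute the connection coefficients directly from the Koszul-type formula \eqref{eq:Koszul}, using the commovil field of observers $Z=\pt$ as the distinguished reference field. Since $\Omega=\diff\pi_I$ and $\Omega(\pt)=1$, the horizontal lifts $\overline{W}\in\Gamma(\An)$ are spacelike, so $P^{Z}\overline{W}=\overline{W}$, and the defining conditions $\nabla_{\pt}\pt=0$ and $\rot\pt=0$ translate precisely into $\mathcal G^{Z}=0$ and $\omega_{Z}=0$. This simplifies \eqref{eq:Koszul} considerably, since all the terms carrying the gravitational field and vorticity drop out. The remaining work is to evaluate the surviving bracket and derivative terms on the metric $g=f^{2}\pi_F^*h$.

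First I would establish the mixed term $\nabla_{\pt}\overline{W}=\nabla_{\overline{W}}\pt=u_t\overline{W}$. The equality of the two covariant derivatives is immediate from symmetry of $\nabla$ together with $[\pt,\overline{W}]=0$ (the horizontal lift of a field on $F$ commutes with $\pt$ in a product). To identify the value, I would apply \eqref{eq:Koszul} with $X=\pt$, $Y=\overline{W}$, $V=\overline{V}$ an arbitrary spacelike test field: most terms vanish because $\Omega(\overline{W})=0$ and the brackets $[\pt,\overline{W}]$, $[\pt,\overline{V}]$ vanish, leaving only $\tfrac12\pt\big(g(\overline{W},\overline{V})\big)=\tfrac12\pt\big(f^{2}h(W,V)\big)=f f_t\,h(W,V)=f^{2}u_t\,h(W,V)=u_t\,g(\overline{W},\overline{V})$, whence the claim by nondegeneracy of $g$ on $\An$.

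Next I would treat the purely spacelike term $\nabla_{\overline{W_1}}\overline{W_2}$. Here $\Omega(\overline{W_1})=\Omega(\overline{W_2})=0$, so \eqref{eq:Koszul} reduces to the bottom three lines (the ordinary Levi-Civita Koszul expression) evaluated on $g=f^{2}\pi_F^*h$. The key observation is that horizontal lifts satisfy $[\overline{W_1},\overline{W_2}]=\overline{[W_1,W_2]}$ and that $\overline{W}\big(f^{2}h(W_1,W_2)\big)$ expands by the product rule into an $f^{2}$-times the Riemannian Koszul data plus terms involving $\overline{W}(f^{2})=2f^{2}\,\diff^h u(W)$. Collecting the former reproduces $f^{2}$ times the Levi-Civita connection of $h$, i.e. the term $\overline{\nabla^h_{W_1}W_2}$ after dividing by $f^{2}$, while the three derivative-of-$f$ contributions assemble into $\diff^h u(W_1)\overline{W_2}+\diff^h u(W_2)\overline{W_1}-h(W_1,W_2)\overline{\grad^h u}$; this is exactly the standard conformal change formula for Levi-Civita connections under $h\mapsto f^{2}h$, which I expect to recognize rather than rederive.

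The main obstacle is bookkeeping rather than conceptual: I must verify that the spacelike projection $P^{Z}$ acts as the identity on horizontal lifts and that the various mixed bracket terms in \eqref{eq:Koszul} genuinely vanish, so that the $t$-dependence of $f$ enters only through the single coefficient $u_t$ in the first equation and does not contaminate the second. I would also confirm that the resulting $\nabla$ is the unique symmetric Galilean connection satisfying \eqref{eq:GT conditions}, which follows from the bijection $\nabla\mapsto(\mathcal G^{Z},\omega_Z)$ cited after \eqref{eq:Koszul}; since the formulas above are forced by that data, uniqueness is automatic.
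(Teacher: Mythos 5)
Your proposal is correct and follows exactly the route the paper takes: the paper's proof is the one-line instruction to apply the Koszul-type formula \eqref{eq:Koszul} with $V=\overline{U}$, $U\in\Gamma(TF)$, in the lifted frame $\{\pt\}\cup\{\overline{E_i}\}$, and your computation (using $\mathcal G^{\pt}=0$, $\omega_{\pt}=0$, the vanishing of the mixed brackets, and the conformal-change bookkeeping for $f^2\pi_F^*h$) is precisely the expansion the paper leaves implicit. No gaps.
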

\begin{proof}
    It is enough to apply equation (\ref{eq:Koszul}) for $V=\overline{U}$, $U\in \Gamma(TF)$, since a local reference frame $\{E_i\}_{i=1}^n$ in $TF$ lift to a local reference frame $\{\pt\}\cup \{\overline{E_i}\}_{i=1}^n$ in $TM$.
\end{proof}

The previous formulae (\ref{eq:GT connection}) allow us to obtain the equations of the projections of the geodesics on the fiber of a GT spacetime. 
\begin{rem}
\upshape
    Let $\gamma$ be a geodesic $\gamma: J\longrightarrow M$ in a GT spacetime $(M=I\times F, \Omega, g, \nabla)$, where $J\subset\R$ denote an interval in the real line. If $\gamma$ is a free falling observer, we have
    \[
        \gamma(s) = (s, \sigma(s)) \quand \gamma'(s) = (\pt \circ \gamma) (s) + \overline{\sigma'}(s)\,,
    \]
    where $\sigma$ is a smooth curve in $F$. Consider $X$ a smooth vector field, which extends $\sigma'$ on a neighbourhood of $\sigma(s)\in F$. Then the lift $\overline{X}$ extends $\overline{\sigma'}$ and satisfies $L_{\pt} \overline{X} = [\pt, \overline{X}] = 0$. Thus, the geodesic equation turns out to be
    \begin{equation*}
    \begin{split}
        0 ={}& \nabla_{\gamma'} \gamma' = \nabla_{\gamma'} (\pt + \overline{X})|_\gamma \\
        ={}& \big(\nabla_{\pt} \pt + \nabla_{\overline{\sigma'}} \pt + \nabla_{\gamma'} \overline{X} \big)|_\gamma \\
        ={}& \big(2\nabla_{\pt} \overline{X} + \nabla_{\overline{\sigma'}} \overline{X}\big)|_\gamma\,,
    \end{split}
    \end{equation*}
    where we have $\nabla_{\pt} \pt=0$. Using (\ref{eq:GT connection}), 
    the free falling observers $\gamma(s) = (s,\sigma(s))$ are characterized by the following equation on $(F,h)$,
    \begin{equation}\label{eq:completeness 1}
        0 = \mfrac{D^h \sigma'}{ds} + 2[u_t + \diff^h u(\sigma')] \sigma' - |\sigma'|^2_h \grad^h u\,.
    \end{equation}

    In the case $\gamma(s) = (c,\sigma(s))$, $c\in I$, is a spacelike geodesic, then $\gamma'=\overline{\sigma'}$. The corresponding equation on $(F,h)$ is given by
    \begin{equation}\label{eq:completeness spacelike}
        0 = \mfrac{D^h \sigma'}{ds} + 2\diff^h u(\sigma') \sigma' - |\sigma'|^2_h \grad^h u\,.
    \end{equation}
    Note that this is now an autonomous equation in the Riemannian manifold $(F,h)$. 
    \hfill $\diamond$
\end{rem}
For the rest of the section we suppose $I=\R$ since it is necessary to ensure the completeness of commovil observers.

Let $\spacetime$ be a GT spacetime with $M=\R\times F$ and $g=f^2 \, \pi_F^* h$. We can give a unique vector field $G$ on $T(\R\times F)$ such that, the timelike geodesic in $M$ are represented by integral curves of $G$ in $T(\R\times F)$. Indeed, let $\gamma(s)=(s,\sigma (s))$ be a free falling observer, $s_0\in J$ and $(U,x^1,...,x^n)$ a coordinate system in $F$, such that $\sigma(s_0)\in U$. If we identify $x(\sigma(s))\equiv x(s)=(x^1(s),...,x^n(s))$, equation (\ref{eq:completeness 1}) in $\R\times U$ is given by
\begin{multline}
\ddot{x}^k(s)+\sum_{i,j} \dot{x}^i(s)\dot{x}^j(s)\Gamma_{ij}^k(x(s))+2\big[u_t(s, x(s))
+\sum_l \mfrac{\partial u}{\partial x^l}(s, x(s))\dot{x}^l(s)\big]\dot{x}^k(s)\\
-\Big(\sum_{\alpha,\beta}h_{\alpha,\beta}(x(s))\dot{x}^\alpha(s)\dot{x}^\beta(s)\Big)\Big(\sum_\alpha h^{\alpha k}(x(s))\mfrac{\partial u}{\partial x^\alpha}(s, x(s))\Big)=0,
\end{multline}
where $k=1,...,n$ and the $\Gamma$'s denote the Christoffel symbols of the Riemannian metric $h$. Making use of the standard procedure, we can introduce auxiliary variables $v^i=\dot{x}^i$ to convert the previous second-order system, in the following equivalent first-order system in twice the number of variables,
\begin{equation}\label{eq:system}
\left\{
\begin{aligned}
\dot{x}^k(s)={}& v^k(s)\,,\\
\dot{v}^k(s)={}& -\sum_{i,j} v^i(s)v^j(s)\Gamma_{ij}^k(x(s))-2\big[u_t(t, x(s))
-\sum_l \mfrac{\partial u}{\partial x^l}(s, x(s))v^l(s)\big]v^k(s) \\
{}& + \Big(\sum_{\alpha,\beta}h_{\alpha,\beta}(x(s))v^\alpha(s)v^\beta(s)\Big)\Big(\sum_\alpha h^{\alpha k}(x(s))\mfrac{\partial u}{\partial x^\alpha}(s, x(s))\Big)\,,
\end{aligned}
\right.
\end{equation}
for $k=0,...,n$.

Treating $(x^1,...,x^n,v^1,...,v^n)$ as	coordinates on $T(\R\times U)$, we recognize  (\ref{eq:system}) as the equations for the flow of the vector field $G \in\mathfrak{X}(\R\times U)$ given by
\begin{multline}
G{(s,x,v)}=\sum_{k}\Big(v^k \mfrac{\partial}{\partial x^k}{\Big|_{(s,x,v)}}-\Big\lbrace\sum_{i,j} v^iv^j\Gamma_{ij}^k(x)+2\big[u_t(s, x(s)) + \sum_l \mfrac{\partial u}{\partial x^l}(s, x(s))v^l\big]v^k\\
-\Big(\sum_{\alpha,\beta}h_{\alpha,\beta}(x(s))v^\alpha v^\beta\Big)\Big(\sum_\alpha h^{\alpha k}(x(s))\mfrac{\partial u}{\partial x^\alpha}(s, x(s))\Big)\Big\rbrace \mfrac{\partial}{\partial v^k}{\Big|_{(t,x,v)}}\Big)\,.
\end{multline}

To show that $G$ define a global well-defined vector field on $T(\R\times F)$, it is enough to see that $G$ acts on a smooth function $\varphi$ by

$$G(\varphi)(p,v)=\tfrac{\diff}{\ds}{\big|_{s=0}} \varphi(\gamma_v(s),\dot{\gamma}_v(s))$$
\noindent where $\gamma_v(s)$ denote a geodesic satisfying $\gamma(0)=p$ and $\dot{\gamma}_v(0)=v$, $p\in \R\times F$ and $v\in T_p(\R\times F)$. This last statement is clear from the previous coordinates expressions.

Making use of \cite[Lem. 1.56]{ONeill1983}, we can assert that an integral curve $\xi$ of $G$ on an interval $[0,b[\subset \R$, for $b<+\infty$, can be extended to $b$ (as an integral curve) if and only if there exists a sequence in $[0,b[$, $\{s_n\}\nearrow b$, such that $\{\xi(s_n)\}$ converges.
Therefore, we obtain the following technical lemma. 
\begin{lem}\label{lem:extensibility}
    A free falling observer $\gamma\colon [0,b[ \to M$, $0<b<+\infty$, in a GT spacetime $\spacetime$ can be extended to $b$ as a geodesic of $\nabla$ if and only if there exists a sequence in $[0,b[$, $\{s_n\}\nearrow b$, such that $\{\gamma'(s_n)\}$ converges in $TM$.
\end{lem}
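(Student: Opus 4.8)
The plan is to reduce the statement to the escape lemma \cite[Lem. 1.56]{ONeill1983} by lifting $\gamma$ to its velocity curve in $TM$. First I would set $\xi(s)\coloneq\gamma'(s)\in TM$ for $s\in[0,b[$, where $\gamma'(s)$ is understood as the point $(\gamma(s),\dot\gamma(s))$ of the tangent bundle. By the construction preceding the lemma, the first-order system (\ref{eq:system}) --- equivalent to the free falling geodesic equation (\ref{eq:completeness 1}) --- is exactly the flow equation of the globally defined smooth vector field $G$ on $T(\R\times F)$; hence $\xi$ is precisely the integral curve of $G$ issuing from $\gamma'(0)$. This identification is the crux: being a free falling geodesic of $\nabla$ is equivalent to having velocity equal to an integral curve of $G$.

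With this dictionary in place, the equivalence follows by translating both conditions through the bundle projection $\pi\colon TM\to M$. For the nontrivial implication, assume there is a sequence $\{s_n\}\nearrow b$ with $\{\gamma'(s_n)\}=\{\xi(s_n)\}$ convergent in $TM$; then \cite[Lem. 1.56]{ONeill1983} extends $\xi$ to an integral curve $\tilde\xi\colon[0,b]\to TM$ of $G$. Its base projection $\pi\circ\tilde\xi$ is a geodesic of $\nabla$ (projections of integral curves of $G$ are geodesics) extending $\gamma$, and since $\tilde\xi=(\pi\circ\tilde\xi)'$ the extension is smooth up to $b$. The converse implication is immediate: if $\gamma$ extends to a geodesic on $[0,b]$, its velocity is continuous at $b$, so $\gamma'(s)=\xi(s)$ converges as $s\to b$ and any sequence $s_n\nearrow b$ suffices.

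Since the genuine analytical work --- proving that $G$ is a well-defined global smooth field whose integral curves are the geodesic velocities --- has already been carried out above, the only remaining point requiring care is verifying that a convergent $\{\xi(s_n)\}$ really has its limit inside $TM$. Here the standing hypothesis $I=\R$ is essential: continuity of $\pi$ forces the base points $\gamma(s_n)=(s_n,\sigma(s_n))$ to converge in $M=\R\times F$, and their time components $s_n\to b$ then land in $I=\R$ rather than escaping to a missing endpoint of a bounded interval. I expect this bookkeeping, rather than any deep difficulty, to be the main thing to get right, after which \cite[Lem. 1.56]{ONeill1983} applies verbatim.
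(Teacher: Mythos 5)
Your proposal is correct and follows essentially the same route as the paper: identify the velocity curve $\gamma'$ with an integral curve of the globally defined field $G$ on $T(\R\times F)$ and apply \cite[Lem.~1.56]{ONeill1983}, then project back via $\pi\colon TM\to M$. The paper treats the lemma as an immediate consequence of that identification; your extra remark on the role of $I=\R$ in ensuring the limit of $\{\gamma'(s_n)\}$ lies over a point of $M$ is a harmless elaboration of the same argument.
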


The following result gives natural conditions to ensure completeness of the free falling observers in a GT spacetime. 
\begin{theor}\label{theor:completeness free falling}
    Let $(F,h)$ be a complete Riemannian manifold and let $f\in C^\infty(\R \times F)$ be a positive function. If $\inf f >0$ and $\inf (f_t/f)>-\infty$ along finite times, then the trajectories of free falling observers in the GT spacetime $(M=\R\times F, \Omega =\diff \pi_\R, g=f^2\, \pi_F^* h, \nabla)$ are complete.
\end{theor}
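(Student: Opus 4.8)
The plan is to argue by contradiction using the extensibility criterion of Lemma \ref{lem:extensibility}. Let $\gamma(s)=(s,\sigma(s))$ be a free falling observer defined on its maximal forward interval $[0,b[$, and suppose $b<+\infty$. By Lemma \ref{lem:extensibility} it suffices to produce a sequence $\{s_n\}\nearrow b$ along which $\gamma'(s_n)$ converges in $TM$. Since $\gamma'(s)=\pt|_{\gamma(s)}+\overline{\sigma'(s)}$ and the base points $\gamma(s_n)=(s_n,\sigma(s_n))$ already have convergent first coordinate $s_n\to b$, everything reduces to controlling the spatial projection: I must show that $\sigma(s)$ converges in $F$ and that $\sigma'(s)$ subconverges in $TF$. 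This is where the completeness of $(F,h)$ and the two bounds on $f$ will enter.

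The key step is to find a quantity built from $\sigma'$ whose growth is governed solely by $f_t/f$. I claim the right object is the $g$-energy of the spatial velocity,
\begin{equation*}
E(s):=f^2\,|\sigma'(s)|_h^2=\big|\overline{\sigma'(s)}\big|_g^2 .
\end{equation*}
Writing $u=\log f$ and differentiating $|\sigma'|_h^2$ along $\gamma$ by means of the geodesic equation (\ref{eq:completeness 1}) together with $\tfrac{d}{ds}u(\gamma(s))=u_t+\diff^h u(\sigma')$, the cross terms involving $\diff^h u(\sigma')$ cancel exactly and one is left with the clean identity $\dot E=-2\,u_t\,E$, i.e. $E(s)=E(0)\exp\!\big(-2\int_0^s u_t(\gamma(\tau))\,d\tau\big)$. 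Now the hypothesis $\inf(f_t/f)>-\infty$ along finite times gives a constant $C$ with $u_t=f_t/f\ge -C$ on the finite slab $[0,b]\times F$, whence $\dot E\le 2C\,E$ and, by Grönwall, $E(s)\le E(0)e^{2Cb}$ on $[0,b[$. Finally $\inf f>0$, say $f\ge m>0$, converts this into a bound on the Riemannian speed, $|\sigma'(s)|_h^2=E(s)/f^2\le E(0)e^{2Cb}/m^2=:R^2$.

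With the speed bounded on a finite interval the remaining steps are standard. The $h$-length of $\sigma$ on $[0,b[$ is at most $Rb<\infty$, and $d_h(\sigma(s_1),\sigma(s_2))\le R|s_2-s_1|$ shows $\sigma$ is uniformly continuous, hence Cauchy as $s\to b$; completeness of $(F,h)$ then yields a limit $p_*=\lim_{s\to b}\sigma(s)$. By the Hopf--Rinow theorem the closed ball $\overline{B}(p_*,1)$ is compact, so for $s$ near $b$ the lifted curve $s\mapsto(\sigma(s),\sigma'(s))$ lies in the compact set $\{(q,w)\in TF:\ q\in\overline{B}(p_*,1),\ |w|_h\le R\}$. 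Extracting a convergent subsequence $\sigma'(s_n)\to w_*\in T_{p_*}F$ and lifting, $\gamma'(s_n)=\pt|_{\gamma(s_n)}+\overline{\sigma'(s_n)}\to \pt|_{(b,p_*)}+\overline{w_*}$ in $TM$; here $I=\R$ is used so that $(b,p_*)\in M$. Lemma \ref{lem:extensibility} then extends $\gamma$ past $b$, contradicting maximality, so $b=+\infty$.

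I expect the main obstacle to be the second paragraph: guessing the correct conformal weight in $E=f^2|\sigma'|_h^2$ and verifying that, after inserting (\ref{eq:completeness 1}), the $\diff^h u(\sigma')$ terms cancel to leave $\dot E=-2u_tE$. It is precisely this identity that isolates the role of $f_t/f$: the sign shows that the lower bound on $f_t/f$ controls $E$ in the direction of increasing proper time, which is exactly the forward completeness sought, while the same computation on an interval $]a,0]$, $a>-\infty$, handles the remaining direction once the corresponding finite-slab bound is in force.
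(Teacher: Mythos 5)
Your proof is correct and follows essentially the same route as the paper: the identity $\dot E=-2u_tE$ for $E=f^2|\sigma'|_h^2$ is exactly the paper's formula $h(\sigma',\sigma')=\tfrac{A}{f^2(\gamma(s))}\exp\big(-2\int_0^s u_t(\gamma(r))\,\diff r\big)$ rewritten, and both arguments then invoke the two finite-time bounds on $f$ and Lemma \ref{lem:extensibility}. The only difference is that you spell out the final step (finite $h$-length, Cauchy, Hopf--Rinow compactness in $TF$) that the paper leaves implicit when passing from boundedness of $h(\sigma',\sigma')$ to the existence of a convergent sequence $\gamma'(s_n)$.
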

\begin{proof}
    Given $0<b<+\infty$, consider a free falling observer 
    $$
    \gamma\colon [0,b[ \to \R \times F, \quad \gamma(s) = (s,\sigma(s))\,,
    $$
    where $\sigma\colon [0,b[ \to F$ is a smooth curve in $F$. The conditions on the function $f$ admit to applying the following argument for past extensibility analogously. Hence, if we show $\gamma$ is extensible towards $b$, the proof will be complete. 
    
    From equation (\ref{eq:completeness 1}) we have
    \begin{equation*}
        \tfrac{\diff}{\ds} h(\sigma', \sigma') = -2 \big[ 2u_t+\diff^h u(\sigma') \big] h(\sigma', \sigma')\,.
    \end{equation*}
    Note that 
    $$\tfrac{\diff}{\ds} (u\circ \gamma) = \diff u(\gamma') = \diff u\big(\pt + \overline{\sigma'}\big) = u_t + \diff^h u(\sigma')\,.$$
    Therefore, the square $h-$norm of $\sigma'$ is given by
    \begin{equation}\label{eq:completeness timelike hnorm}
    \begin{split}
        h(\sigma',\sigma') 
        ={}& A \exp\Big(-2\int_{0}^s u_t(\gamma(r)) \diff r\Big) \exp\big(-2 u(\gamma(s))\big)\\
        ={}& \tfrac{A}{f^2 (\gamma(s))} \exp\Big(-2\int_{0}^s u_t(\gamma(r)) \diff r\Big) \,,
    \end{split}        
    \end{equation}
    for some $A\in \R$ determined by the initial values $\gamma(0)$ and $\gamma'(0)$. 
    Hence, given $T>0$ such that $[0,b[\subset [0,T]$, the hypothesis on $f$ imply $h(\sigma',\sigma')$ is bounded. 
    Calling Lemma \ref{lem:extensibility}, we get $\gamma$ is extensible.
\end{proof}

\begin{rem}
\upshape
In our case (finite dimension), it is not necessary to endow the tangent fiber bundle with a Riemannian metric. Indeed, it is no difficult to see, under the assumptions of Lemma \ref{lem:extensibility}, that each integral curve of the vector field $G$ can be included in a compact subset of $TM$. 
\end{rem}

In the case of spacelike geodesics, we obtain analogous results.
\begin{theor}\label{theor:completeness spacelike}
    Let $(F,h)$ be a complete Riemannian manifold and let $f\in C^\infty(\R \times F)$ be a positive function. If $\inf f >0$ along finite times, then the spacelike geodesics in the GT spacetime $(M=\R\times F, \Omega =\diff \pi_\R, g=f^2\, \pi_F^* h, \nabla)$ are complete.
\end{theor}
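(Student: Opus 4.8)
The plan is to mirror the strategy of Theorem~\ref{theor:completeness free falling}, exploiting that a spacelike geodesic is confined to a single slice $\{c\}\times F$, so that the governing equation (\ref{eq:completeness spacelike}) is \emph{autonomous} on $(F,h)$. Fix $c\in\R$ and a spacelike geodesic $\gamma(s)=(c,\sigma(s))$ defined on $[0,b[$ with $0<b<+\infty$; by the same reasoning as in the free-falling case it suffices to show $\gamma$ extends to $b$, the past case being analogous. Since $\gamma'=\overline{\sigma'}$ has no $\pt$-component, only the restrictions to the slice enter the computation, and I write $u=\log f$ as usual.

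First I would control the speed $\phi(s):=h(\sigma'(s),\sigma'(s))$. Differentiating and using metric compatibility of $\nabla^h$ together with (\ref{eq:completeness spacelike}),
\begin{equation*}
\tfrac{\diff}{\ds}\phi = 2\,h\!\left(\tfrac{D^h\sigma'}{\ds},\sigma'\right) = -2\,\diff^h u(\sigma')\,\phi,
\end{equation*}
where the contributions of $-2\diff^h u(\sigma')\sigma'$ and $|\sigma'|_h^2\,\grad^h u$ collapse because $h(\grad^h u,\sigma')=\diff^h u(\sigma')$. Next I observe that $\tfrac{\diff}{\ds}(u\circ\gamma)=\diff u(\overline{\sigma'})=\diff^h u(\sigma')$, with no $u_t$ contribution since the $t$-coordinate is frozen at $c$; hence the identity above is a linear ODE whose solution is
\begin{equation*}
h(\sigma',\sigma') = A\,\exp\!\big(-2\,u(\gamma(s))\big) = \frac{A}{f^2(\gamma(s))},
\end{equation*}
with $A\in\R$ fixed by the initial data $\gamma(0),\gamma'(0)$.

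The heart of the argument is then the bound on the speed. Because $\gamma$ remains in the single finite-time slice $\{c\}\times F$, the hypothesis $\inf f>0$ along finite times (applied to any $[s_1,s_2]\ni c$) provides a constant $m>0$ with $f(\gamma(s))\ge m$, and therefore $h(\sigma',\sigma')\le A/m^2$ for all $s\in[0,b[$. This is precisely why the extra condition $\inf(f_t/f)>-\infty$ required in Theorem~\ref{theor:completeness free falling} disappears here: no integral of $u_t$ occurs, so boundedness of $f$ from below alone suffices.

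Finally I would convert the speed bound into extensibility. With $|\sigma'|_h$ bounded and $b<+\infty$, the curve $\sigma$ has finite $h$-length on $[0,b[$, so it stays within a closed metric ball about $\sigma(0)$, which is compact by the Hopf--Rinow theorem since $(F,h)$ is complete. Consequently $\{(\sigma(s),\sigma'(s))\}$ lies in a compact subset of $TF$, and any sequence $s_n\nearrow b$ yields a convergent subsequence of velocities; by the spacelike analogue of Lemma~\ref{lem:extensibility} (equation (\ref{eq:completeness spacelike}) being the flow of an autonomous field on $TF$, to which \cite[Lem.~1.56]{ONeill1983} applies verbatim) the geodesic $\gamma$ extends past $b$, so spacelike geodesics are complete. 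I expect the only genuinely delicate point to be phrasing this extensibility criterion correctly in the autonomous spacelike setting; the remaining steps are a direct simplification of the free-falling computation.
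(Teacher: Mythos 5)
Your proposal is correct and follows essentially the same route as the paper: the same ODE $\tfrac{\diff}{\ds}h(\sigma',\sigma')=-2\,\diff^h u(\sigma')\,h(\sigma',\sigma')$ integrated to $h(\sigma',\sigma')=A/f^2(\gamma(s))$, the same use of $\inf f>0$ along finite times to bound the speed, and the same appeal to an adapted version of Lemma~\ref{lem:extensibility}. The only difference is that you spell out the final extensibility step (finite length, Hopf--Rinow compactness, convergent velocities in $TF$) which the paper dispatches as ``an easy adaptation''; this is a welcome clarification, not a new argument.
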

\begin{proof}
    Given a spacelike geodesic $\gamma(s) = (c,\sigma(s))$, we can reason analogously by defining a suitable vector field on $TF$, whose integral curves are determined by equation (\ref{eq:completeness spacelike}). Now, 
    \[
        \tfrac{\diff}{\ds} h(\sigma', \sigma') = -2 \diff^h u(\sigma') h(\sigma', \sigma')\,.
    \]
    and $\dds (u\circ \gamma) = \diff^h u(\sigma')$. Therefore,
    \begin{equation}\label{eq:completeness spacelike hnorm}
        h(\sigma',\sigma') = \tfrac{A}{f^2 (\gamma(s))},\quad A\in \R\,.
    \end{equation}
    Using now $\inf f >0$ on $[0,T] \supset [0,b[$, we have $h(\sigma',\sigma')$ is bounded. An easy adaptation of Lemma \ref{lem:extensibility} imply $\gamma$ is extensible. 
\end{proof}

Note that, if the fiber $(F,h)$ is complete, Theorem \ref{theor:completeness spacelike} assures the geodesic completeness of any Riemannian manifold $(F, g_t)$, for $g_t = f^2(t,\cdot) \pi_F^* h$, under a natural assumption on $f$. 

We can ensure the geodesic completeness of a GT spacetime by unifying the results on the completeness of free falling observers and spacelike geodesics.
\begin{theor}\label{theor:completeness}
    Let $(F,h)$ be a complete Riemannian manifold and let $f\in C^\infty(\R \times F)$ be a positive function. If $\inf f >0$ and $\inf (f_t/f)>-\infty$ along finite times, then the GT spacetime $(M=\R\times F, \Omega =\diff \pi_\R, g=f^2\, \pi_F^* h, \nabla)$ is geodesically complete.
\end{theor}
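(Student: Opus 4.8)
The plan is to reduce the statement to the two completeness results already obtained, Theorem~\ref{theor:completeness free falling} and Theorem~\ref{theor:completeness spacelike}, by exploiting that every geodesic has constant causal character. Indeed, by the compatibility condition $\nabla\Omega=0$, for any geodesic $\gamma\colon J\to M$ the quantity $\Omega(\gamma')$ is constant along $\gamma$; denote this constant by $c$. Thus every maximal geodesic falls into exactly one of two classes: spacelike ones, with $c=0$, and timelike ones, with $c\neq 0$. First I would dispose of the spacelike case, which is precisely the content of Theorem~\ref{theor:completeness spacelike} (there only $\inf f>0$ along finite times is needed, and it is included in our hypotheses).

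For the timelike case the key observation is that an affine reparametrization of a geodesic is again a geodesic: if $\gamma$ satisfies $\nabla_{\gamma'}\gamma'=0$ and $\tilde\gamma(s)\coloneq\gamma(as+b)$ with $a\neq 0$, then $\nabla_{\tilde\gamma'}\tilde\gamma'=a^2\,\nabla_{\gamma'}\gamma'=0$, while $\Omega(\tilde\gamma')=a\,\Omega(\gamma')=ac$. Given a maximal timelike geodesic $\gamma$ with $\Omega(\gamma')=c\neq 0$, I would set $a=1/c$, so that $\tilde\gamma$ becomes a free falling observer, $\Omega(\tilde\gamma')=1$ (when $c<0$ the orientation is reversed, which is covered by the past-extensibility half of the argument already noted in the proof of Theorem~\ref{theor:completeness free falling}). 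Since an affine reparametrization with $a\neq 0$ carries the maximal domain $J$ onto $\{s:as+b\in J\}$, one has $J=\R$ if and only if $\tilde\gamma$ is defined on all of $\R$; hence completeness of $\gamma$ is equivalent to completeness of $\tilde\gamma$. The latter follows from Theorem~\ref{theor:completeness free falling}, which is exactly where the second hypothesis $\inf(f_t/f)>-\infty$ along finite times is used.

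Combining the two cases yields geodesic completeness of the whole spacetime. The only point requiring care --- and the closest thing to an obstacle --- is the passage from the unit-speed free falling observers of Theorem~\ref{theor:completeness free falling} to arbitrary timelike geodesics; this is settled cleanly by the affine reparametrization above, which shows that no genuinely new case arises. All the analytic content (the Gr\"onwall-type control of $h(\sigma',\sigma')$ obtained by integrating the evolution of the $h$-norm along $\sigma$, together with the extensibility criterion of Lemma~\ref{lem:extensibility}) has already been carried out in the two preceding theorems, so the present statement is essentially a corollary of them.
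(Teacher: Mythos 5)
Your proof is correct and follows essentially the same route as the paper, which offers no separate argument for this theorem beyond the statement that it ``unifies'' Theorems \ref{theor:completeness free falling} and \ref{theor:completeness spacelike}. The only addition is that you make explicit the affine reparametrization reducing a timelike geodesic with $\Omega(\gamma')=c\neq 0$ to a free falling observer --- a step the paper silently absorbs into its opening remark of Section \ref{sect4} that only the cases $\Omega(\gamma')=0$ and $\Omega(\gamma')=1$ need be considered.
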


If $F$ is compact, the hypothesis for $f\in C^\infty(\R \times F)$ in Theorem \ref{theor:completeness} are fulfilled trivially.
\begin{cor}\label{corollary:completeness compact}
    Let $(F,h)$ be a compact Riemannian manifold and $f\in C^\infty(\R \times F)$ a positive function. Then, the GT spacetime 
    $(M=\R\times F, \Omega =\diff \pi_\R, g=f^2\, \pi_F^* h, \nabla)$ is geodesically complete. 
\end{cor}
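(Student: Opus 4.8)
The plan is to deduce the statement directly from Theorem \ref{theor:completeness}, by checking that the compactness of $F$ forces all of that theorem's hypotheses on $f$ (positivity being already assumed) together with the completeness of the fiber. So the whole task reduces to verifying three things: that $(F,h)$ is complete, that $\inf f>0$ along finite times, and that $\inf(f_t/f)>-\infty$ along finite times.

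First I would invoke the Hopf--Rinow theorem: a compact Riemannian manifold is geodesically complete, so $(F,h)$ complete comes for free. Next, both remaining conditions reduce to the extreme value theorem on a compact set. Fix an arbitrary closed interval $[s_1,s_2]\subset\R$. Since $F$ is compact, the slab $[s_1,s_2]\times F$ is compact. The map $f$ is continuous and strictly positive on this slab, hence attains a positive minimum $m>0$; choosing any constant $C_{s_1,s_2}\in(0,m)$ gives $f>C_{s_1,s_2}$ throughout $[s_1,s_2]\times F$, which is exactly the meaning of $\inf f>0$ along finite times. Likewise, $f_t$ is smooth and $f$ is continuous and nowhere vanishing, so the quotient $f_t/f$ is continuous on the compact slab and therefore bounded below there; this yields $\inf(f_t/f)>-\infty$ along finite times. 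With all hypotheses of Theorem \ref{theor:completeness} in place, the conclusion follows at once.

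I do not expect any genuine obstacle here: the entire content is carried by Theorem \ref{theor:completeness}, and the role of compactness is simply to upgrade the pointwise positivity of $f$ to the uniform bounds \emph{along finite times} that the theorem requires. It is worth stressing the one subtlety that makes the argument work: one cannot hope for $f$ or $f_t/f$ to be bounded on all of $\R\times F$ (the function $f$ may well decay or blow up as $t\to\pm\infty$), and this is precisely why the relevant notion is boundedness along finite times. Compactness of the fiber $F$ alone, applied slab by slab on each finite time interval, delivers exactly these finite-time bounds, which is all that is needed.
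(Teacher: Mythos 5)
Your proof is correct and follows exactly the route the paper intends: the paper simply remarks that compactness of $F$ makes the hypotheses of Theorem \ref{theor:completeness} hold trivially, and your argument (Hopf--Rinow for completeness of the fiber, plus the extreme value theorem on each compact slab $[s_1,s_2]\times F$ to obtain the finite-time bounds on $f$ and $f_t/f$) is precisely the verification being left implicit. No gaps.
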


The following example prove that the conditions in Theorem \ref{theor:completeness} are indeed optimal.
\begin{ex}
\upshape
    Consider the Riemannian manifold $(F,h) = (\R, \cdot)$, to be the one-dimensional Euclidean space and $f\in C^\infty(\R^2)$ a smooth positive function. They define the GT spacetime 
    $$(\R\times \R, \Omega=\dt, g=f^2 \pi_2^*(\cdot) \equiv f^2 \diff x^2, \nabla)\,, $$ where $t\coloneq \pi_1$ and $x\coloneq \pi_2$.
    Then, free falling observers $\gamma(s) = (s,\sigma(s))$ satisfy the following second-order nonlinear differential equation (see (\ref{eq:completeness 1})), 
    \begin{equation*}
        0 = \diff \pi_2\big(\tfrac{D\gamma'}{ds}\big) = \sigma'' + 2\tfrac{f_t}{f}(s,\sigma(s))\,\sigma' + \tfrac{f_x}{f}(s,\sigma(s)) (\sigma')^2 \,.
    \end{equation*}
    From (\ref{eq:completeness timelike hnorm}) (which can be integrated explicitly in this case), we know 
    \begin{equation}
        \sigma'(s) = \mfrac{C}{f(s,\sigma(s))} \exp\Big(-\int_0^s u_t(r,\sigma(r)) \diff r\Big), \quad C\in \R\,,
    \end{equation}
    for a constant $C=\big\|\overline{\sigma'}\big\|_{\{0\}\times \R} = f(0,\sigma(0))\, \sigma'(0)$ (see equation (\ref{eq:velocidad relativa comoviles GT})).

    If we consider an incomplete geodesic $\gamma=(\cdot,\sigma(\cdot))\colon [0,b[\to \R\times \R$, for $b>0$, then $\lim_{s\tendsto b^-} |\sigma' (s)| = \infty$. For example, if $\inf f>0$ along finite times, then it is necessary $\lim_{s\tendsto b^-} u_t(s,\sigma(s)) = -\infty$. Thus,
    \[
        \lim_{s\tendsto b^-} \sigma(s) = \pm\infty \quand \lim_{x\tendsto\pm \infty}u_t(b,x) = -\infty\,.
    \]
    On the other hand, if $\inf u_t >-\infty$ along finite times, then $\lim_{s\tendsto b^-} f(s,\sigma(s)) = 0$. Hence, 
    \[
        \lim_{s\tendsto b^-} \sigma(s) = \pm\infty \quand \lim_{x\tendsto\pm \infty}f(b,x) = 0\,.
    \]
    Note that the conditions of boundedness from below on $f$ and $f_t/f$ arise naturally when there are incomplete geodesics. 

    In the case $f\colon F=\R \to \R$ is time-independent, free falling observers $\gamma(s) = (s,\sigma(s))$ can be integrated as
    \begin{equation}\label{eq:example completeness 2}
        \sigma(s) = G^{-1}(Cs+D), \quad C,D\in \R\,,
    \end{equation}
    where $G$ is a primitive function of $f$. Note that $G$ is a diffeomorphism on $\R$ onto its image since $G' = f >0$. 

    For instance, if $f(x) =  (1+x^2)^{-1}$, free falling observers satisfy $\gamma(s) = (s,\sigma(s)) = (s,\tan (Cs+D))$, where $C,D\in \R$ are constants fixed by $\sigma(0)$ and $\sigma'(0)$. These are incomplete observers. 

\end{ex}

\section{Physical interpretation of geodesics}\label{sect-phys}
Several interesting physical observations can be obtained from the expressions in the previous section.

\begin{rem}
\upshape
    From equation (\ref{eq:completeness timelike hnorm}), in the conditions of previous theorems, the free falling observers $\gamma$ satisfy
    \begin{equation}\label{eq:velocidad relativa comoviles GT}
        \|P^{\pt} \gamma' (s)\|_{\f_s} =\|P^{\pt} \gamma'(0)\|_{\f_0}\,\, e^{-\int_0^s u_t(\gamma)}\,,
    \end{equation}
    where $\| \cdot\|_{\f_s}$ is the norm induced by $g$ on the spacelike slice $\f_s = \{s\}\times F$, for all $s\in \R$. For a spacelike geodesic $\gamma=(c,\sigma(\cdot))$, equation (\ref{eq:completeness spacelike hnorm}) imply 
    \begin{equation}
        \|P^{\pt} \gamma'(s)\|_{\f_c} = \|P^{\pt} \gamma'(0)\|_{\f_c}\,.
    \end{equation}
    This last property is obvious from the fact that $\sigma = P^{\pt} \gamma'$ is a geodesic in the Riemannian manifold $(\f_c,g_c)$.
\end{rem}

Now, equation (\ref{eq:velocidad relativa comoviles GT}) allows us to heuristically deduce the behaviour of the complete timelike geodesics in a GT spacetime in relation to certain properties of the twister function.
Consider two simultaneous events $(s_0,p), (s_0,q) \in M$. Their spacelike distance (in the absolute space) is given by
\begin{equation}\label{eq:Galilean distance}
    \text{dist}_{\f_{s_0}}(({s_0},p),({s_0},q))=\inf_{\xi\in C_{F}([0,1],p,q)} \int_0^1 f\big({s_0},\xi(r)\big)\, \big|\xi'(r)\big|_h \diff r\,,
\end{equation}
where $C_F([0,1],p,q)$ is the set of admissible curves in $F$ from $p$ to $q$ (see, for instance, \cite{DoCarmo-Riemannian}). Note that the function  $\text{dist}_{\f_{s_0}}(({s_0},p),({s_0},q))$ determine the distance between the commovil observer $\theta_p({s_0})=({s_0},p)$ and $\theta_q({s_0})=({s_0},q)$ in the absolute space $\f_{s_0}$. Moreover, if $\beta(r) = (s_0, \xi(r))$, $r\in [0,1]$, is a minimizing geodesic in $\f_{s_0}$ from $p$ to $q$, then we have
\begin{equation}\label{eq:Galilean distance geodesic}
    \text{dist}_{\f_{s_0}}(({s_0},p),({s_0},q))=  \int_0^1 \|\beta'\| \diff r = \|\beta'\|\,,
\end{equation}
where $\| \cdot \|$ is the norm induced by $g$ in $\f_{s_0}$.

Obviously, if $f_t>0$ (resp. $f_t<0$) the absolute space is expanding (resp. contracting) over time. If for instance $f_t>0$, then the spacelike component of the velocity $\gamma'(s)$ of a free falling observer (no commovil) $\gamma$, relative to the corresponding instant commovil observer, decreases in norm. Taking this into account, if we assume that the twister function $f_t$ tends to $+\infty$ for $t\nearrow +\infty$, each free falling observer $\gamma$ evolves asymptotically to commovil observer directions (see fig. (a)). If otherwise $f_t<0$ and we assume $f_t$ tends to $-\infty$ for $t\nearrow +\infty$ and $\inf f>0$, then every free falling  observer $\gamma$ evolves asymptotically to some spacelike direction (see fig. (b)).

\begin{figure}[htbp]
    \centering
    \begin{subfigure}[b]{0.45\textwidth}
        \centering
        \tdplotsetmaincoords{70}{110}
        \begin{tikzpicture}[scale=2.5, tdplot_main_coords]
            \filldraw[fill=black!20!white, draw=green!70!black, opacity=0.5] (-1.25,-1,0) -- (1.25,-1,0) -- (1.25,1,0) -- (-1.25,1,0) -- cycle;
            \node[black] at (1.5,1.5,0) {$\mathcal{F}_{s_0} = \{s_0\} \times F$};
            
            \foreach \x/\q in {-0.5/{q}, 0.5/{p}} {
                \draw[dashed] (\x,0,-1) -- (\x,0,1);
                \node[right] at (\x,-0.3,1.2) {$\theta_{\q}(s)$};
            }
            
            \filldraw[red] (-0.5,0,0) circle (0.5pt) node[right] {$(s_0,q)$};
            \filldraw[red] (0.5,0,0) circle (0.5pt) node[left] {$(s_0,p)$};
            
            \draw[->, thick, blue] (-0.5,0,0) -- ++(0,0,0.4) node[above right] {$\partial_t$};
            \draw[->, thick, blue] (0.5,0,0) -- ++(0,0,0.4) node[above left] {$\partial_t$};
            \draw[blue, thick, ->] (0.5,0,0) -- ++(-0.5,0.01,0) node[below right] {$\gamma'(s_0)$};
            
            \draw[black, thick, dashed] (1.3,0,-0.1) .. controls (0.75,0,-0.15) and (0.75,0,-.05) .. (0.5,0,0);
            \draw[black, thick] (0.5,0,0) .. controls (0.25,0,0) and (0,0,0) .. (-0.4,0,0.5) node[left] {$\gamma$};
        \end{tikzpicture}
        \caption{$f_t(s_0,p)>0$}
        \label{fig:sub1}
    \end{subfigure}
    \begin{subfigure}[b]{0.45\textwidth}
        \centering
        \tdplotsetmaincoords{70}{110}
        \begin{tikzpicture}[scale=2.5, tdplot_main_coords]
            \filldraw[fill=black!20!white, draw=green!70!black, opacity=0.5] (-1.25,-1,0) -- (1.25,-1,0) -- (1.25,1,0) -- (-1.25,1,0) -- cycle;
            \node[black] at (1.5,1.5,0) {$\mathcal{F}_{s_0} = \{s_0\} \times F$};
            
            \foreach \x/\q in {-0.5/{q}, 0.5/{p}} {
                \draw[dashed] (\x,0,-1) -- (\x,0,1);
                \node[right] at (\x,-0.3,1.2) {$\theta_{\q}(s)$};
            }
            
            \filldraw[red] (-0.5,0,0) circle (0.5pt) node[right] {$(s_0,q)$};
            \filldraw[red] (0.5,0,0) circle (0.5pt) node[left] {$(s_0,p)$};
            
            \draw[->, thick, blue] (-0.5,0,0) -- ++(0,0,0.4) node[above right] {$\partial_t$};
            \draw[->, thick, blue] (0.5,0,0) -- ++(0,0,0.4) node[above left] {$\partial_t$};
            \draw[blue, thick, ->] (0.5,0,0) -- ++(-0.2,0,0.25) node[below right] {$\gamma'(s_0)$};
            
            \draw[black, thick, dashed] (1,0,-0.7) .. controls (0.75,0,-0.25) and (0.75,0,-0.2) .. (0.5,0,0);
            \draw[black, thick] (0.5,0,0) .. controls (0.25,0,0.25) and (0,0,0.4) .. (-0.4,0,0.5) node[left] {$\gamma$};
        \end{tikzpicture}
        \caption{$f_t(s_0,p)<0$}
        \label{fig:sub2}
    \end{subfigure}
    \label{fig:fig}
\end{figure}

Now, it is natural to ask how one free falling observer measures the speed of another one. If two points in the fiber $p,q\in F$ are closely enough, free falling observers passing through one of them can measure the velocity of the others. If the leaves of the foliation induced by $\Omega$ are affine Euclidean spaces, the classical notion of relative velocity of an observer with respect to another is determined by the existence of a position vector between them. As a generalization, we can give the following, 
\begin{defi}
    Let $\gamma,\sigma$ be two observers in a (symmetric) Galilean spacetime and $\mathcal{F}_{s_{0}}$ the spacelike leaf in the absolute time $t=s_{0}\in\mathbb{R}$. If there exists a spacelike geodesic $\beta\colon [0,1] \to \f_{s_0}$ from $\gamma(s_{0})$ to $\sigma(s_{0})$ in $\f_{s_0}$, the \emph{relative spacelike velocity} of $\gamma$ with respect to $\sigma$ at $t=s_0$, $\textbf{V}\big(\gamma(s_0),\sigma(s_0)\big)$, is defined as the spacelike projection (with respect to $\sigma'$) of the parallel transport along $\beta$ of $\gamma'(s_0)$, i.e., 
    \[
        \textbf{V}\big(\gamma(s_0),\sigma(s_0)\big)\coloneq P^{\sigma'}\big(\mathcal{P}^\beta_{\gamma(s_0),\sigma(s_0)} \gamma'(s_0)\big)\,,
    \]
    where $\mathcal{P}^\beta$ denotes the parallel transport along $\beta$.
\end{defi}

Recall that in a Riemannian manifold, the existence and uniqueness of geodesics between two given points is locally ensured. Globally, the existence result holds in the complete case. Moreover, the no existence of conjugate points guarantees globally the uniqueness.

Let us deduce some interesting geometric properties of this definition in a GT spacetime. The fixed reference observer will be taken as a commovil observer.
Indeed, given a free falling observer $\gamma$ in a GT spacetime, denote $\gamma'(s_0) = \theta_p(s_0) + v$, for $v\in (\An)_{(s_0,p)}$. The parallel transport $\mathcal{P}^\beta_{(s_0,p),(s_0,q)} (\gamma'(s_0)) = V(1)$ is defined by a parallel vector field $V$ along $\beta$ with $V(0) = \gamma'(s_0)$. This can be split into 
$$V(r) = a(r) \pt|_{(s_0,\beta(r))} + U(r)\,,$$
where $a$ is a smooth function on $[0,1]$ and $U$ is a spacelike vector field along $\beta$. The initial value condition $V(0) = \gamma'(s_0)$ turns out
\[
    a(0) = 1, \quad U(0) = v\,.
\]
From the fact that $V$ is parallel and $\beta$ can be lifted from a curve in $F$, it is easy to obtain that $a'=0$ and $\mfrac{D U}{\diff r} = -u_t \beta'$. 
Therefore, $a\equiv 1$ and $U$ satisfy the initial value problem 
\[
    \mfrac{D U}{\diff r} = -u_t \beta', \quad U(0) = v\,.
\]

The modulus of the relative spacelike velocity of $\gamma'(s_0)$ measured by $\theta_q(s_0)$, $\big\|\textbf{V}\big(\gamma'(s_0),\theta_q(s_0)\big)\big\| = \|U(1)\| = g(U(1), U(1))$, can be explicitly obtained in this case. It is enough to compute 
\[
    \tfrac{\diff}{\diff r} \|U\|^2 = -2 u_t g(U,\beta'),\quad \tfrac{\diff}{\diff r} g(U,\beta') = -u_t \|\beta'\|^2\,,
\]
using $\beta$ is a geodesic. Obviously, $\|\beta'\|$ is a constant. Taking into account the initial values, we can integrate both two differential equations, obtaining
\begin{equation}
    \big\|\textbf{V}\big(\gamma'(s_0),\theta_q(s_0)\big)\big\| = \Big\|\Big(\int_0^1 u_t(s_0,\beta(r)) \diff r \Big) w - v \Big\|_{\f_{s_0}}\,,
\end{equation}
where $v=P^{\pt} \gamma'(0)$ and the vector $w\coloneq \beta'(0)$. This expression, takes into account the geometric structure of the slice $\f_{s_0}$, as well as, the distance between the points $(s_0,p)$ and $(s_0,q)$ (see equations (\ref{eq:Galilean distance}) and (\ref{eq:Galilean distance geodesic})). 
In particular, if $\gamma$ is a commovil observer, the velocity of $\theta_p(s_0)$ measured by $\theta_q(s_0)$ is 
\begin{equation}\label{eq:Galilean relative velocity commovil}
\begin{split}
    \big\|\textbf{V}\big(\theta_p(s_0),\theta_q(s_0)\big)\big\| ={}& \Big|\int_0^1 u_t(s_0,\beta(r)) \diff r \Big| \big\| \beta'\big\| \\
    ={}& \Big|\int_0^1 u_t(s_0,\beta(r)) \diff r \Big| \, \text{dist}_{\f_{s_0}} \big( (s_0,p), (s_0,q) \big)\,.
\end{split}
\end{equation}
where $|\cdot|$ is the absolute value function.
We see that it is proportional to the spacelike distance between the points $(s_0,p)$ and $(s_0,q)$.

Finally, if the GT spacetime can be reduced to a GGRW spacetime, $f=f(t)$, then expression (\ref{eq:Galilean relative velocity commovil}) turns out $ \|\textbf{V} \big({\theta_p(s_0)},\theta_q(s_0)\big)\| = u_t(s_0) \, \text{dist}_{\f_{s_0}} \big( (s_0,p), (s_0,q) \big)$.

\section{Galilean Torqued Vector Fields}\label{sect5}
Along this section, we will see that there exists a natural generalization of GT spacetimes in a rich geometrical structure characterized by the existence of a  distinguished  vector field on the Galilean spacetime. In fact, the main theorem will state that these Galilean spacetimes are locally GT spacetimes. 
\begin{pro}\label{pro:torqued vector field in GT}
    Let $(M=I\times F, \Omega = \dt, g = f^2 \pi_F^* h, \nabla)$ be a GT spacetime. Then, the vector field $K=f\pt$ satisfies
    \begin{equation}\label{eq:GT torqued}
        \nabla_X K = f_t X + \alpha(X) K \quand \alpha(K) = 0, \quad \forall X \in \Gamma(T M)\,,
    \end{equation}
    where $\alpha = \overline{\diff^h} (\log f) \in \bigwedge^1 (M)$ (see subsection \ref{subsection:gradient}).
\end{pro}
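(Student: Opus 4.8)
The plan is to exploit that both sides of \eqref{eq:GT torqued} are $C^\infty(M)$-linear in $X$: the left-hand side because $\nabla$ is a connection, the right-hand side because $f_t X + \alpha(X)K$ is manifestly tensorial in $X$. Hence it suffices to verify the identity on a convenient local frame. Since any local frame $\{E_i\}$ on $F$ lifts to the frame $\{\partial_t\}\cup\{\overline{E_i}\}$ on $M$, I would only treat the two cases $X=\partial_t$ and $X=\overline{W}$ for an arbitrary $W\in\Gamma(TF)$, relying throughout on the connection formulas \eqref{eq:GT connection} together with the defining condition $\nabla_{\partial_t}\partial_t=0$.

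For $X=\partial_t$, the Leibniz rule applied to $K=f\partial_t$ gives $\nabla_{\partial_t}K = f_t\,\partial_t + f\,\nabla_{\partial_t}\partial_t = f_t\,\partial_t$. Since $\alpha=\overline{\diff^h}(\log f)$ annihilates $\partial_t$ by its very definition, the right-hand side is likewise $f_t\,\partial_t$; moreover $\alpha(K)=f\,\alpha(\partial_t)=0$, which disposes of the second assertion immediately.

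For $X=\overline{W}$, the Leibniz rule yields $\nabla_{\overline{W}}K = \overline{W}(f)\,\partial_t + f\,\nabla_{\overline{W}}\partial_t$. Here $\overline{W}(f)=\diff^h f(W)$ because $\overline{W}$ carries no temporal component, and $\nabla_{\overline{W}}\partial_t = u_t\,\overline{W}$ by \eqref{eq:GT connection}. The last ingredient is the passage between $f$ and $u=\log f$: one has $u_t=f_t/f$, so $f u_t = f_t$, and $\diff^h u = \diff^h f/f$, so $\diff^h f(W)=f\,\diff^h u(W)$. Substituting these gives $\nabla_{\overline{W}}K = f\,\diff^h u(W)\,\partial_t + f_t\,\overline{W}$, which is exactly $f_t\,\overline{W} + \alpha(\overline{W})\,K$ because $\alpha(\overline{W})=\diff^h u(W)$ and $K=f\partial_t$.

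I do not anticipate a genuine obstacle: once the tensoriality reduction is made, the verification is entirely routine. The only point deserving care is the bookkeeping of the $\partial_t$-components — the term $\overline{W}(f)\,\partial_t$ produced by differentiating the scalar factor $f$ must be matched against the $\alpha(\overline{W})K$ term on the right, and this matching is precisely what forces $\alpha$ to be the logarithmic spatial differential $\overline{\diff^h}(\log f)$ rather than the spatial differential of $f$ itself.
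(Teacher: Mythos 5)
Your proof is correct and follows essentially the same route as the paper's: both expand $K=f\partial_t$ via the Leibniz rule, use $\nabla_{\partial_t}\partial_t=0$ and $\nabla_{\overline{W}}\partial_t=u_t\overline{W}$ from \eqref{eq:GT connection}, and identify the leftover $\partial_t$-component with $\overline{\diff^h}(\log f)(X)\,K$. The only cosmetic difference is that you verify the identity on the frame $\{\partial_t\}\cup\{\overline{E_i}\}$ and invoke tensoriality, whereas the paper writes a general $X$ in that frame and performs the single computation directly.
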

\begin{proof}
    Let $X\in \Gamma(TM)$ be a vector field. Consider $\{E_i\}_{i=1}^n$ a local reference frame in $TF$ that lifts to a local reference frame $\{\pt\}\cup \{\overline{E_i}\}_{i=1}^n$ in $TM$. Then, there exist local smooth functions $g_0,g_1,\ldots,g_n$ such that
    \begin{equation*}
        X=g_0 \pt + \sum_i g_i \overline{E_i}\,.
    \end{equation*}
    Then, if $u\coloneq \log f$, we have
    \begin{equation*}
    \begin{split}
        \nabla_X K ={}& X(f) \pt + f \nabla_X \pt \\
        ={}& X(u) K + f \sum_i g_i u_t \overline{E_i} \\
        ={}& X(u) K + f u_t (X-g_0\pt) \\
        ={}& f_t X + \overline{\diff^h} u(X) K\,.
    \end{split}
    \end{equation*}
\end{proof}

As we announced, a generalization of GT spacetimes is defined by means of a vector field satisfying an analogous expression to equation (\ref{eq:GT torqued}).
\begin{defi}\label{defi:torqued}
    Let $(M, \Omega, g, \nabla)$ a Galilean spacetime. A future-pointing timelike vector field $K \in \Gamma(TM)$ satisfying
    \begin{equation}\label{eq:torqued}
        \nabla_X K = \rho X + \alpha(X) K \quand \alpha(K) = 0, \quad \forall X \in \Gamma(T M)\,,
    \end{equation}
    for some $\rho\in C^\infty(M)$ and $\alpha \in  \bigwedge^1 (M)$, is called a \textup{Galilean torqued vector field}. The function $\rho$ is called the \textup{torqued function} and $\alpha$ the \textup{torqued form}.
\end{defi}

The torqued function and form are fully determined by the function $\Omega(K)$, as we prove in the following lemma.
\begin{lem}
    Let $\spacetime$ be a spacetime and suppose it admits a Galilean torqued vector field $K\in \Gamma(TM)$ with torqued function $\rho\in C^\infty(M)$ and torqued form $\alpha \in \bigwedge^1(M)$. Then, 
    \begin{equation}\label{eq:alpha}
        \alpha(V) = \tfrac{1}{\Omega(K)} V\big(\Omega(K)\big), \quad\forall V\in \Gamma(\An)\,.
    \end{equation}
 and
    \begin{equation}\label{eq:rho}
        \rho = \tfrac{1}{\Omega(K)} K\big( \Omega(K)\big)\,.
    \end{equation}
\end{lem}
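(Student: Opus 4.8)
The plan is to extract both formulas from a single master identity obtained by applying the absolute clock $\Omega$ to the defining equation \eqref{eq:torqued}. The only ingredients needed are the compatibility condition $\nabla\Omega = 0$ (equivalently, $X(\Omega(K)) = \Omega(\nabla_X K)$ for all $X$) and the fact that $K$ is timelike future-pointing, so $\Omega(K)>0$ and in particular $\Omega(K)\neq 0$.

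First I would apply $\Omega$ to both sides of $\nabla_X K = \rho X + \alpha(X) K$. The left-hand side becomes $\Omega(\nabla_X K) = X(\Omega(K))$ by condition (1) in the definition of a Galilean connection, while the right-hand side gives $\rho\,\Omega(X) + \alpha(X)\,\Omega(K)$ by linearity of $\Omega$. This produces the master identity
\begin{equation*}
    X\big(\Omega(K)\big) = \rho\,\Omega(X) + \alpha(X)\,\Omega(K), \qquad \forall X\in \Gamma(TM)\,.
\end{equation*}

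Next I would specialize this identity in two ways. Taking $X = V \in \Gamma(\An)$, the term $\rho\,\Omega(V)$ vanishes since $\Omega(V)=0$, leaving $V(\Omega(K)) = \alpha(V)\,\Omega(K)$; dividing by $\Omega(K)\neq 0$ yields equation \eqref{eq:alpha}. Then taking $X = K$ and using the hypothesis $\alpha(K)=0$, the last term drops out and one obtains $K(\Omega(K)) = \rho\,\Omega(K)$, so dividing again by $\Omega(K)$ gives equation \eqref{eq:rho}.

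There is no real obstacle here: the argument is a direct computation, and the only point requiring care is invoking $\Omega(K)\neq 0$ so that the divisions are legitimate, which is guaranteed by the timelike character of $K$. The conceptual content is simply that the compatibility of $\nabla$ with $\Omega$ transfers the algebraic structure of \eqref{eq:torqued} into a differential relation for the scalar $\Omega(K)$, from which $\rho$ and $\alpha$ are read off by evaluating on $K$ and on spacelike directions, respectively.
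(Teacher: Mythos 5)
Your proof is correct and is essentially the paper's argument: both rest on applying the compatibility $\nabla\Omega=0$ to the torqued identity and evaluating on spacelike $V$ and on $K$ (the paper merely routes the $\alpha$ computation through the normalized field $Z=K/\Omega(K)$ before applying $\Omega$, which is a cosmetic difference). Your handling of $\Omega(K)\neq 0$ via the timelike hypothesis is the right justification for the divisions.
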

\begin{proof}
    Consider the field of observers $Z=\frac{1}{\Omega(K)} K$. Then, for any $V\in \Gamma(\An)$,
    \begin{equation}\label{eq:proof derivative Z}
    \begin{split}
        \nabla_V Z ={}& V\big(\tfrac{1}{\Omega(K)}\big) K + \tfrac{1}{\Omega(K)} \nabla_V K \\
        ={}& \tfrac{\rho}{\Omega(K)} V + \left[ \tfrac{\alpha(V)}{\Omega(K)} + V\big(\tfrac{1}{\Omega(K)}\big) \right] K\,.
    \end{split}
    \end{equation}
    Now, making use of the compatibility of the connection with $\Omega$ we have $\Omega(\nabla_V Z) = V(\Omega(Z)) = 0$. Then, from (\ref{eq:proof derivative Z}) we get (\ref{eq:alpha}).

    On the other hand, from equation (\ref{eq:torqued}) we have $\nabla_K K = \rho K$. Consequently,
    \[
        K(\Omega(K)) = \Omega(\nabla_K K) = \rho \Omega(K)\,.
    \]
    Thus, it follows expression (\ref{eq:rho}).
\end{proof}

\begin{rem}\label{rem:torqued}
\upshape
    Let $K$ be a Galilean torqued vector field in a Galilean spacetime $\spacetime$ with torqued function $\rho\in C^\infty(M)$ and torqued form $\alpha \in \bigwedge^1(M)$. Consider the associated field of observers $Z\coloneq\tfrac{1}{\Omega(K)} K$. Note that we have obtained in the proof of the lemma the following identity,
    \begin{equation}\label{eq:connection on Z}
        \nabla_V Z = \tfrac{\rho}{\Omega(K)} V, \quad \forall V\in \Gamma(\An)\,.
    \end{equation}
    Since $V(\Omega(Z)) = 0$ for all $V\in \Gamma(\An)$, from equation (\ref{eq:connection on Z}) we can compute 
    \begin{equation*}
    \begin{split}
        L_Z g(V,W) ={}& Z g(V,W) - g([Z,V],W) - g(V, [Z,W]) \\
        ={}& g(\nabla_V Z, W) + g(V, \nabla_W Z) \\
        ={}& 2 \tfrac{\rho}{\Omega(K)} g(V,W)\,,
    \end{split}
    \end{equation*}
    for all $V,W\in \Gamma(\An)$ (see subsection \ref{sescls}).
    Therefore, $Z$ is a spatially conformally Leibnizian vector field.

    \hfill $\diamond$
\end{rem}

Finally, we obtain the main theorem of this section, which determines the local structure of Galilean spacetimes admitting a Galilean torqued vector field.
\begin{theor}\label{theor:local structure torqued}
    Let $\spacetime$ be a Galilean spacetime. 
    If it admits a torqued vector field $K\in \Gamma(TM)$, then for each $p\in M$, there exists an open neighborhood of p, $\U$, and a Galilean diffeomorphism $\Psi\colon N \to \U$, where $N$ is a GT spacetime. 
\end{theor}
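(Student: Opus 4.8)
The plan is to let the field of observers $Z\coloneq K/\Omega(K)$ generate, through its flow, a local product on which the torqued structure becomes a twisted one. Since $\nabla$ is symmetric, (\ref{eq:torsion and clock}) gives $\diff\Omega=0$; shrinking to a simply connected neighborhood of $p$, the Poincaré Lemma provides a function $t$ with $\Omega=\diff t$, and by Frobenius the level sets of $t$ are spacelike hypersurfaces. Let $F$ be the leaf through $p$, taken small enough, with $t(p)=0$, and set $h\coloneq g|_F$. As $K$ is future-pointing timelike, $\phi\coloneq\Omega(K)>0$ and $\Omega(Z)=1$, so $Z$ is a field of observers with $Z(t)=1$; its time-$\tau$ flow $\Phi_\tau$ satisfies $t\circ\Phi_\tau=t+\tau$ and maps $\{t=0\}$ to $\{t=\tau\}$. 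Writing $t$ for the first coordinate of $N=I\times F$, the map $\Psi(t,x)\coloneq\Phi_t(x)$, defined for $t$ in a small interval $I\ni 0$, is a diffeomorphism onto a neighborhood $\U$ of $p$ (it restricts to $\mathrm{id}_F$ on $\{0\}\times F$ and $\diff\Psi(\pt)=Z$ is transverse to the leaves); moreover $\Psi^*\Omega=\diff t$, which is the first datum of Definition \ref{defi:GT}.

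Next I would identify the spacelike metric. By Remark \ref{rem:torqued}, $Z$ is spatially conformally Leibnizian with $\mathcal{L}_Z g=2\tfrac{\rho}{\phi}g$ on $\mathrm{An}(\Omega)$. Given $v,w\in T_xF$, let $V,W$ be their Lie drags along the flow (so $[Z,V]=[Z,W]=0$); then along $\tau\mapsto\Phi_\tau(x)$ one has $\tfrac{\diff}{\diff\tau}g(V,W)=(\mathcal{L}_Z g)(V,W)=2\tfrac{\rho}{\phi}g(V,W)$, a linear ODE with $g(V,W)|_{\tau=0}=h(v,w)$. Integrating,
\[
  g\big(\diff\Phi_t(v),\diff\Phi_t(w)\big)=f(t,x)^2\,h(v,w),\qquad f(t,x)\coloneq\exp\Big(\int_0^t\tfrac{\rho}{\phi}\big(\Phi_\tau(x)\big)\,\diff\tau\Big),
\]
a positive smooth function with $f(0,\cdot)=1$. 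Since $\diff\Psi(0,v)=\diff\Phi_t(v)$, this reads $\Psi^*g=f^2\,\pi_F^*h$ on spacelike vectors, the second datum of Definition \ref{defi:GT}.

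It then remains to match the connection. From $\nabla_K K=\rho K$ (established while proving the lemma) and (\ref{eq:rho}), $\nabla_Z Z=\tfrac{1}{\phi^2}\big(\rho-\tfrac{K(\phi)}{\phi}\big)K=0$, so the gravitational field of $Z$ vanishes; and (\ref{eq:connection on Z}) gives $\nabla_V Z=\tfrac{\rho}{\phi}V$ for $V\in\Gamma(\mathrm{An}(\Omega))$, whose antisymmetric part is zero, so the vorticity $\tfrac12\rot(Z)=0$. Because $\Psi$ preserves $\Omega$ and $g$ and $\nabla$ is symmetric Galilean, $\Psi^*\nabla$ is a symmetric Galilean connection on $(N,\diff t,f^2\pi_F^*h)$; relative to $\pt$ (which $\Psi$ carries to $Z$) its gravitational field and vorticity are the pullbacks of $\nabla_Z Z=0$ and $\rot(Z)=0$. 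By the bijection between symmetric Galilean connections and pairs $(\mathcal{G},\omega)$ furnished by the Koszul formula (\ref{eq:Koszul}) (see \cite[Cor.~28]{Bernal}), $\Psi^*\nabla$ equals the unique connection with $\nabla_{\pt}\pt=0$ and $\rot\pt=0$, i.e.\ the GT connection of Definition \ref{defi:GT}. Hence $\Psi\colon N\to\U$ is a Galilean diffeomorphism onto $\U$, with $N$ the GT spacetime $(I\times F,\diff t,f^2\pi_F^*h,\Psi^*\nabla)$.

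The crux is the second step: one must verify that the spacelike metric, transported by the flow, is a single fiber metric $h$ rescaled by one scalar function $f^2$, rather than an arbitrary $t$-dependent family of metrics on $F$. This is exactly where the conformal character of $Z$ — the proportionality $\nabla_V Z=\tfrac{\rho}{\phi}V$ of (\ref{eq:connection on Z}), equivalently $\mathcal{L}_Z g=2\tfrac{\rho}{\phi}g$ — is indispensable: it forces the flow to scale all spacelike directions by the same factor, yielding the twisted-product form; a merely Leibnizian (non-conformal) $Z$ would shear the leaves and no such form would arise.
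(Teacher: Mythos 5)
Your proposal is correct and follows essentially the same route as the paper: use the local flow of $Z=K/\Omega(K)$ to build the product chart, invoke the spatially conformally Leibnizian property of $Z$ (Remark \ref{rem:torqued}) to integrate the ODE $\eta'=2\tfrac{\rho}{\Omega(K)}\eta$ and obtain the twisting function $f=\exp\bigl(\int_0^t\tfrac{\rho}{\Omega(K)}\circ\phi_\tau\,\diff\tau\bigr)$, and then identify the connection via $\nabla_Z Z=0$, $\rot Z=0$ and the bijection of \cite[Cor.~28]{Bernal}. The only cosmetic difference is that you derive the metric ODE with Lie-dragged vector fields whereas the paper differentiates $g_{\phi_s(q)}(\diff\phi_s(v),\diff\phi_s(w))$ directly; these are the same computation.
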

\begin{proof} 
    This is in fact a generalization of the corresponding proof of the local structure Theorem for ICL spactimes \cite[Th. 12]{GGRW}. In order to
make the paper self-contained we give the details.
    
    Suppose there exists a torqued vector field $K$. Let $\rho\in C^\infty(M)$ and $\alpha \in \bigwedge^1(M)$ be the associated torqued function and one-form, and $Z\coloneq \frac{1}{\Omega(K)} K$ be the associated field of observers. Given an event $p\in M$, take a neighbourhood $U_p$ of $p$ in the leaf $\mathcal{F}_p$ of the foliation induced by $\Omega$, and $I\subset \R$, $0\in I$, a suitable interval such that the local flow of $Z$,
    \[
    \begin{split}
        \Psi\colon I \times U_p \to{}& M\\
        (s,q) \longmapsto{}& \phi_s(q)\,,
    \end{split}
    \]
    is well-defined and one-to-one. It satisfies
    \[
        \diff \Psi_{(s,q)}(1,0) = Z_{\phi_s(q)}, \quad \diff \Psi_{(s,q)}(0,v) = \big(\diff \phi_s\big)_q(v)\,,
    \]
    for all $(s,q) \in I\times U_p$ and $v\in T_pU_p$. Now, if we identify $t\equiv \pi_I$ and $\pt \equiv (1,0)$, it follows by construction
    \[
        \Omega(\diff \Psi(\pt)) = \Omega (Z) = 1 = \dt (\pt)\,.
    \]
    Moreover, the flow $\phi_s$ of $Z$ preserves $\Omega$ on spacelike vector fields, $L_Z \Omega(V) = 0$, $\forall V\in \Gamma(\An)$. Then,
    $$
        \Omega\left(\diff \Psi_{(s, q)}(0, v)\right)=\Omega\left((\diff \phi_s)_q(v)\right)=(\phi_s^* \Omega)_q(v)=\Omega_q(v)=0 .
    $$
    Consequently, $\Psi^* \Omega=\dt$, and each level set of $t$ corresponds with certain open set $\mathcal{U} \cap \mathcal{F}_a$ of some leaf of the foliation $\mathcal{F}$ induced by $\Omega$.
    
    From Remark \ref{rem:torqued}, $Z$ is spatially conformally Leibnizian:
    \begin{equation}\label{eq:spatially conf leibnizian}
        L_Z g(V, W)=\tfrac{2 \rho}{\Omega(K)} g(V, W), \quad \forall V, W \in \Gamma(\operatorname{An}(\Omega)) \,.
    \end{equation}
    Consider $v,w\in T_q U_p$. If we define
    $$
    \eta_q(s)\coloneq \Psi^* g_{(s, q)}((0, v),(0, w))=g_{\phi_s(q)}\left(\big(\diff \phi_s\big)_q(v), \big(\diff \phi_s\big)_q(w)\right),
    $$
    it follows from equation (\ref{eq:spatially conf leibnizian}),
    $$
    \eta_q^{\prime}(s)=\left.\tfrac{2 \rho}{\Omega(K)}\right|_{\phi_s(q)} \eta_q(s)\,.
    $$
    Then, taking
    \begin{equation}\label{eq:scale factor}
        f(s,q)\coloneq\exp \left(\int_0^s \frac{\rho}{\Omega(K)}\left(\phi_l(q)\right) \diff l\right)\,,
    \end{equation}
    we obtain $\eta_q(s)=f^2(s,q) g_q(v,w)$. Therefore, $\Psi^* g(s,q)=f^2(s,q) g_q$.
    
    Finally, from equation (\ref{eq:connection on Z}), and any arbitrary tensor field $V\in \Gamma(\An)$, we have
    $$
    \operatorname{Rot}(Z)(V, W)=g\left(\nabla_V Z, W\right)-g\left(\nabla_W Z, V\right) =0\,,
    $$
    and 
    \begin{equation*}
        \nabla_Z Z = \tfrac{\rho}{\Omega(K)} Z + Z\big(\tfrac{1}{\Omega(K)}\big) K = 0\,.
    \end{equation*}
    Thus, $Z$ is an irrotational field of observers. Since $\diff\Psi(\pt) = Z$, from \cite[Cor. 28]{Bernal}, the Galilean connection $\nabla$ is $\Psi-$related to the corresponding GT connection $\nabla^{GT}$ in $I\times U_p$. Therefore, $\Psi$ is a Galilean diffeomorphism on 
    \begin{equation*}
        \big(I\times U_p, \diff \pi_I, f^2\, \pi_{U_p}^* g, \nabla^{GT}\big)\,.
    \end{equation*}
\end{proof}

Note that the conclusions are similar when we consider ICL spacetimes \cite{GGRW}. In this case, the distinction lies in the fact that the torqued factor is not necessarily spatially uniform.

\section{Global splitting}\label{sect6}
This section is devoted to give natural geometric conditions to characterize a Galilean spacetime as a global GT spacetime. We will use strongly the geometric properties satisfied by a Galilean torqued vector field.

In the following Lemma, we define an auxiliary  semi-Riemannian metric in terms of the Galilean structure of a Galilean spacetime. The spacelike gradient defined in subsection \ref{subsection:gradient} allows us to write the associated Levi-Civita connection in terms of the Galilean connection.
\begin{lem}\label{lem:Koszul}
    Let $\spacetime$ be a Galilean spacetime 
    and let $Z\in \Gamma(TM)$ be an irrotational spatially conformally Leibnizian field of observers. Given a smooth function $\varphi\in C^\infty(M)$ and $\epsilon\in \{-1,+1\}$, define on $M$ the semi-Riemannian metric
    \[
    \overline{g}_{\epsilon,\varphi} \coloneq \epsilon \Omega \otimes \Omega + e^{2\varphi} g(P^Z\cdot, P^Z \cdot)\,.
    \]
    Then, the associated Levi-Civita connection $\overline{\nabla}$ of $\overline{g}_{\epsilon,\varphi}$ is given by 
    \begin{equation}\label{eq:semi-Riemannian connection}
    \begin{aligned}
        \overline{\nabla}_X Y= & \nabla_X Y+\diff \varphi(X) P^Z Y+\diff \varphi(Y) P^Z X-\Omega(X) \Omega(Y) \mathcal{G}^Z \\
        & -\overline{g}_{\epsilon,\varphi}(P^Z X, P^Z Y) \grad \varphi - \epsilon\big(Z(\varphi)+\Tilde{\rho}\big) \overline{g}_{\epsilon, \varphi}\left(P^Z X, P^Z Y\right) Z\,,
    \end{aligned}
    \end{equation}
    for all $X,Y\in \Gamma(TM)$, where $\mathcal{G}^Z$ is the gravitational field of $Z$, $\Tilde{\rho}$ is the conformal factor associated to $Z$ and $\grad \varphi$ is the spacelike gradient of $\varphi$ in the Galilean structure. 
\end{lem}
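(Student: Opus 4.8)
The plan is to verify directly that the right-hand side of \eqref{eq:semi-Riemannian connection} defines the Levi--Civita connection of $\overline{g}_{\epsilon,\varphi}$, invoking the fundamental theorem of semi-Riemannian geometry: a torsion-free $\overline{g}_{\epsilon,\varphi}$-compatible connection is unique. Write $D_XY$ for the expression on the right of \eqref{eq:semi-Riemannian connection}; three things must be checked, namely that $D$ is an affine connection, that it is symmetric, and that it is metric for $\overline{g}_{\epsilon,\varphi}$. Throughout I would exploit the orthogonal splitting $TM=\R Z\oplus\An$ induced by $\overline{g}_{\epsilon,\varphi}$, together with $\overline{g}_{\epsilon,\varphi}(Z,Z)=\epsilon$, $\overline{g}_{\epsilon,\varphi}(Z,V)=0$ and $\overline{g}_{\epsilon,\varphi}(V,W)=e^{2\varphi}g(V,W)$ for $V,W\in\Gamma(\An)$, all immediate from $\Omega(Z)=1$ and $P^ZZ=0$. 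I would also record that $\Omega(\nabla_XZ)=X(\Omega(Z))=0$ and $\Omega(\mathcal{G}^Z)=0$, that $\nabla_XV\in\Gamma(\An)$ whenever $V\in\Gamma(\An)$ (since $\Omega(V)\equiv 0$ and $\nabla\Omega=0$), and that, exactly as in Remark \ref{rem:torqued} and \eqref{eq:connection on Z}, the irrotational and spatially conformally Leibnizian hypotheses on $Z$ force $\nabla_VZ=\Tilde{\rho}\,V$ for every spacelike $V$.

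First I would dispatch the two easy points. That $D$ is a connection follows because $\nabla$ is one and every remaining term in \eqref{eq:semi-Riemannian connection} is $C^\infty(M)$-linear in $X$ and in $Y$ separately; in particular the only non-tensorial contribution in the second slot is the $(Xf)Y$ coming from $\nabla_X(fY)$, so the Leibniz rule holds. Symmetry is equally quick: since $\nabla$ is a symmetric Galilean connection, $\nabla_XY-\nabla_YX=[X,Y]$, and each of the correction terms, namely $\diff\varphi(X)P^ZY+\diff\varphi(Y)P^ZX$, $\Omega(X)\Omega(Y)\mathcal{G}^Z$, $\overline{g}_{\epsilon,\varphi}(P^ZX,P^ZY)\grad\varphi$ and $\epsilon(Z(\varphi)+\Tilde{\rho})\,\overline{g}_{\epsilon,\varphi}(P^ZX,P^ZY)\,Z$, is manifestly invariant under exchanging $X$ and $Y$. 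Hence $D_XY-D_YX=[X,Y]$ and $D$ is torsion-free.

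The substance is metric compatibility, $X\,\overline{g}_{\epsilon,\varphi}(Y,W)=\overline{g}_{\epsilon,\varphi}(D_XY,W)+\overline{g}_{\epsilon,\varphi}(Y,D_XW)$. By tensoriality it suffices to take $Y$ and $W$ from an adapted local frame $\{Z\}\cup\{V_i\}$ with $V_i\in\Gamma(\An)$ and $X$ arbitrary, yielding three cases. For $Y=W=Z$ the left side vanishes, and one checks $\Omega(D_XZ)=0$ using $\Omega(\nabla_XZ)=0$ and $\Omega(\mathcal{G}^Z)=0$, so $\overline{g}_{\epsilon,\varphi}(D_XZ,Z)=\epsilon\,\Omega(D_XZ)=0$. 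In the mixed case $Y=Z$, $W=V$ spacelike, I would compute the $Z$- and $\An$-components of $D_XZ$ and $D_XV$; the factors of $e^{2\varphi}$ cancel uniformly and the identity collapses to $g(\nabla_{P^ZX}Z,V)=\Tilde{\rho}\,g(P^ZX,V)$, which is precisely $\nabla_VZ=\Tilde{\rho}V$. For $Y=V$, $W=V'$ both spacelike, the left side is $X(e^{2\varphi}g(V,V'))=2\,\diff\varphi(X)\,e^{2\varphi}g(V,V')+e^{2\varphi}\,Xg(V,V')$; on the right, $\nabla g=0$ together with $\nabla_XV,\nabla_XV'\in\Gamma(\An)$ recovers $e^{2\varphi}\,Xg(V,V')$, the two $\diff\varphi(X)P^Z(\cdot)$ terms produce $2\,\diff\varphi(X)\,e^{2\varphi}g(V,V')$, and I would check that the contributions from the $\diff\varphi(V)P^ZX$ terms and from the $\grad\varphi$ terms occur in transposed pairs and cancel.

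The main obstacle is exactly this spacelike--spacelike case, where one is reproducing the Levi--Civita connection of the conformally rescaled fiber metric $e^{2\varphi}g$ inside the larger degenerate Galilean framework. The delicate point is the precise bookkeeping of which metric ($g$ versus $\overline{g}_{\epsilon,\varphi}$) and which gradient enter the $\grad\varphi$ term, since the cancellation of the transposed cross terms is the analogue of the standard conformal-change identity $\overline{\nabla}_XY=\nabla_XY+\diff\varphi(X)Y+\diff\varphi(Y)X-g(X,Y)\grad\varphi$ restricted to the fibers, and keeping the powers of $e^{2\varphi}$ consistent is where care is required. Once the three cases are verified, uniqueness of the Levi--Civita connection forces $D=\overline{\nabla}$, which is \eqref{eq:semi-Riemannian connection}.
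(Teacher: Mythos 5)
Your route --- calling the right-hand side of \eqref{eq:semi-Riemannian connection} $D_XY$, checking that $D$ is an affine, torsion-free, $\overline{g}_{\epsilon,\varphi}$-metric connection, and invoking uniqueness of the Levi--Civita connection --- is legitimate and genuinely different from the paper's, which instead expands the Koszul formula for $\overline{g}_{\epsilon,\varphi}$ and subtracts the ``\`a la Koszul'' identity \eqref{eq:Koszul} for $\nabla$, once against a spacelike $V$ and once against $Z$. Your preliminary reductions are sound: the orthogonal splitting, $\Omega(\nabla_XZ)=0$, $\nabla_XV\in\Gamma(\An)$, and the derivation of $\nabla_VZ=\Tilde{\rho}\,V$ from irrotationality together with the spatially conformally Leibnizian condition (adding and subtracting $g(\nabla_VZ,W)\pm g(\nabla_WZ,V)$) are all correct, and the $Z$--$Z$ and mixed cases close exactly as you describe, the $\grad\varphi$ term never entering them because $P^ZZ=0$ and $\grad\varphi$ is spacelike.

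The gap sits precisely at the point you flag as delicate and then wave through. In the spacelike--spacelike case the cross terms do \emph{not} cancel for the formula as printed. The term $\diff\varphi(V)P^ZX$ in $D_XV$ contributes $e^{2\varphi}\diff\varphi(V)\,g(P^ZX,V')$ to $\overline{g}_{\epsilon,\varphi}(D_XV,V')$, whereas the term $-\overline{g}_{\epsilon,\varphi}(P^ZX,P^ZV')\grad\varphi$ in $D_XV'$ contributes $-e^{4\varphi}\diff\varphi(V)\,g(P^ZX,V')$: one factor $e^{2\varphi}$ comes from $\overline{g}_{\epsilon,\varphi}(P^ZX,P^ZV')=e^{2\varphi}g(P^ZX,V')$ and a second from pairing the Galilean gradient $\grad\varphi$ against $V$ with $\overline{g}_{\epsilon,\varphi}$. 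The transposed pair therefore sums to $(e^{2\varphi}-e^{4\varphi})\,\diff\varphi(V)\,g(P^ZX,V')$, which vanishes only if $\varphi\equiv 0$. Carrying your computation out honestly shows that metric compatibility forces the term to be $-g(P^ZX,P^ZY)\grad\varphi$, with the Galilean metric $g$ rather than $\overline{g}_{\epsilon,\varphi}$ (equivalently, one keeps $\overline{g}_{\epsilon,\varphi}(P^ZX,P^ZY)$ but takes the gradient with respect to $\overline{g}_{\epsilon,\varphi}$, which rescales by $e^{-2\varphi}$). This is exactly what the paper's own intermediate identity $-e^{2\varphi}V(\varphi)\,g(P^ZX,P^ZY)$ produces, so the displayed statement carries a spurious factor of $e^{2\varphi}$ in that one term. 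By asserting the cancellation instead of performing it, your proposal leaves the only substantive step unproved and misses that the verification actually fails for the formula as literally stated.
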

\begin{proof} 
    Consider $X,Y\in \Gamma(TM)$ and $V\in \Gamma(\An)$. Koszul formula \cite[Th. 11]{ONeill1983} gives
    \begin{equation}\label{eq:lem Koszul}
    \footnotesize
    \begin{aligned}
    2 \overline{g}_{\epsilon, \varphi}\left(\overline{\nabla}_X Y, V\right)= & X\left(\overline{g}_{\epsilon, \varphi}(Y, V)\right)+Y\left(\overline{g}_{\epsilon, \varphi}(X, V)\right)-V\left(\overline{g}_{\epsilon, \varphi}(X, Y)\right) \\
    & +\overline{g}_{\epsilon, \varphi}([X, Y], V)-\overline{g}_{\epsilon, \varphi}([X, V], Y)-\overline{g}_{\epsilon, \varphi}([Y, V], X) \\
    = & e^{2 \varphi}\left\{2 X(\varphi) g\left(P^Z Y, V\right)+X\left(g\left(P^Z Y, V\right)\right)+2 Y(\varphi) g\left(P^Z X, V\right)\right. \\
    & +Y\left(g\left(P^Z X, V\right)\right)-2 V(\varphi) g\left(P^Z X, P^Z Y\right)-V\left(g\left(P^Z X, P^Z Y\right)\right) \\
    & +g\left(\big[P^Z X, P^Z Y\big], V\right)-\Omega(X) g\left(\big[P^Z Y, Z\big], V\right)+\Omega(Y) g\left(\big[P^Z X, Z\big], V\right) \\
    & -g\left(\big[P^Z X, V\big], P^Z Y\right)+\Omega(X) g\left([V, Z], P^Z Y\right) \\
    & \left.-g\left(\big[P^Z Y, V\big], P^Z X\right)+\Omega(Y) g\left([V, Z], P^Z X\right)\right\}
    \end{aligned}
    \end{equation}
    Considering $\omega(Z)=0$, we can derive from equation (\ref{eq:Koszul}) the following expression:
    \begin{equation}\label{eq:lem Koszul Gal}
    \footnotesize
    \begin{aligned}
    2 \overline{g}_{\epsilon, \varphi}\left(\nabla_X Y, V\right) ={}& 2 e^{2 \varphi} g\left(P^Z\left(\nabla_X Y\right), V\right) \\
    ={}& e^{2 \varphi}\left\{X\left(g\left(P^Z(Y), V\right)\right)+Y\left(g\left(P^Z(X), V\right)\right)-V\left(g\left(P^Z(X), P^Z(Y)\right)\right.\right. \\
    & +2 \Omega(X) \Omega(Y) g\left(\mathcal{G}^Z, V\right) +\Omega(X)\left(g\left(\big[Z, P^Z(Y)\big], V\right)-g\left([Z, V], P^Z(Y)\right)\right) \\
    & -\Omega(Y)\left(g\left(\big[Z, P^Z(X)\big], V\right)+g\left([Z, V], P^Z(X)\right)\right)  +g\left(\big[P^Z(X), P^Z(Y)\big], V\right)\\ 
    & -g\left(\big[P^Z(Y), V\big], P^Z(X)\right) \left. -g\left(\big[P^Z(X), V\big], P^Z(Y)\right)\right\} \,.
    \end{aligned}
    \end{equation}
    Substituting (\ref{eq:lem Koszul Gal}) in (\ref{eq:lem Koszul}), we obtain 
    \[
    \begin{aligned}
    \overline{g}_{\epsilon, \varphi}\left(\overline{\nabla}_X Y, V\right)={}& \overline{g}_{\epsilon, \varphi}\left(\nabla_X Y, V\right)+\diff \varphi(X) \overline{g}_{\epsilon, \varphi}(Y, V)+ \diff \varphi(Y) \overline{g}_{\epsilon, \varphi}(X, V) \\
    & -e^{2\varphi} V(\varphi) g\left(P^Z X, P^Z Y\right)-\Omega(X) \Omega(Y) \overline{g}_{\epsilon, \varphi}\left(\mathcal{G}^Z, V\right)\,.
    \end{aligned}
    \]

    On the other hand, applying (\ref{eq:Koszul}) again and the relation $L_Z g = 2 \Tilde{\rho} g$, we have
    \begin{equation}\label{eq:lem Koszul Z}
    \footnotesize
    \begin{aligned}
    2 \overline{g}_{\epsilon, \varphi}\left(\overline{\nabla}_X Y, Z\right) ={}& X\left(\overline{g}_{\epsilon, \varphi}(Y, Z)\right)+Y\left(\overline{g}_{\epsilon, \varphi}(X, Z)\right)-Z\left(\overline{g}_{\epsilon, \varphi}(X, Y)\right) \\
    & +\overline{g}_{\epsilon, \varphi}([X, Y], Z)-\overline{g}_{\epsilon, \varphi}([X, Z], Y)-\overline{g}_{\epsilon, \varphi}([Y, Z], X) \\
    ={}& \epsilon X(\Omega(Y))+\epsilon Y(\Omega(X))-\epsilon Z(\Omega(X)) \Omega(Y)-\epsilon \Omega(X) Z(\Omega(Y)) \\
    & -2 e^{2 \varphi} Z(\varphi) g\left(P^Z X, P^Z Y\right)-e^{2 \varphi} Z\left(g\left(P^Z X, P^Z Y\right)\right)+\epsilon \Omega([X, Y]) \\
    & -\epsilon \Omega([X, Z]) \Omega(Y)-e^{2 \varphi} g\left(P^Z[X, Z], P^Z Y\right) \\
    & -\epsilon \Omega([Y, Z]) \Omega(X)-e^{2 \varphi} g\left(P^Z[Y, Z], P^Z X\right) \\
    ={}& 2 \epsilon X(\Omega(Y))-2 e^{2 \varphi} Z(\varphi) g\left(P^Z X, P^Z Y\right) - 2 e^{2 \varphi} \Tilde{\rho} g\left(P^Z X, P^Z Y\right) .
    \end{aligned}
    \end{equation}
    Using $\overline{g}_{\epsilon, \varphi}\left(\nabla_X Y, Z\right)=\epsilon \Omega\left(\nabla_X Y\right)=\epsilon X(\Omega(Y))$, equation (\ref{eq:lem Koszul Z}) gives
    $$
    \begin{aligned}
    \overline{g}_{\epsilon, \varphi}\left(\overline{\nabla}_X Y, Z\right) ={}& \overline{g}_{\epsilon, \varphi}\left(\nabla_X Y, Z\right)-e^{2 \varphi} Z(\varphi) g\left(P^Z X, P^Z Y\right) - e^{2 \varphi} \Tilde{\rho} g\left(P^Z X, P^Z Y\right) \,.
    \end{aligned}
    $$
    
\end{proof}

We will describe how the semi-Riemannian induced structure on a Galilean spacetime can be split into a semi-Riemannian twisted product $I\times_f \f$, for an interval $I$, a Riemannian manifold $\f$ and certain positive smooth function $f\in C^\infty(I\times \f)$. This requires a slight adaptation of \cite[Th. 1]{Ponge1993}, in which the geodesic completeness of the leaves $L$ is replaced in the following special case. 

Consider a smooth $r$-dimensional manifold $N$ and suppose $N$ admits two smooth transverse foliations $\mathcal{L}$ and $\mathcal{K}$, $1$-dimensional the first one and $(r-1)$-dimensional the second one. We can assume that $\mathcal{L}$ is defined by a smooth vector field $X_{\mathcal{L}}$.

\begin{defi} 
The foliation $\mathcal{L}$ admits a \emph{uniform global generator} $X_{\mathcal{L}}$ if there exists a transverse leaf $\mathcal{K}_0\in \mathcal{K}$ and an open real interval $I$, such that the maximal flow $\phi$ of $X_{\mathcal{L}}$ is well-defined and onto on $I\times \mathcal{K}_0$. 
\end{defi}

The proof of \cite[Th. 1]{Ponge1993} remains valid in the case of the following lemma.
\begin{lem}\label{lem:Ponge modified}
    Let $(M,g)$ be a simply connected semi-Riemannian manifold endowed with two orthogonal foliations $\mathcal{L}$ and $\mathcal{K}$. Suppose $\mathcal{L}$ is one-dimensional and admits a uniform global generator $X\in \Gamma(TM)$ with associated interval $I\subset \R$ and transverse leaf $\mathcal{K}_0\in \mathcal{K}$. If $\mathcal{L}$ is totally geodesic and $\mathcal{K}$ is totally umbilic, then $(M,g)$ is isometric to a (semi-Riemannian) twisted product $I \times_f \mathcal{K}_0$. 
\end{lem}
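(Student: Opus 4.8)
The plan is to follow the strategy of the local structure theorem (Theorem~\ref{theor:local structure torqued}) and of \cite{Ponge1993}, replacing the completeness of the leaves of $\mathcal{L}$ by the uniform global generator hypothesis, which already furnishes a globally defined, surjective flow on $I\times \mathcal{K}_0$. The proof splits into constructing a candidate diffeomorphism from the flow, computing the pulled-back metric, and finally promoting this to a global isometry.

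First I would normalize the generator. Since $\mathcal{L}$ and $\mathcal{K}$ are orthogonal and complementary and $g$ is nondegenerate on $TM$, $g$ is nondegenerate on $T\mathcal{L}$ (if $a\in T\mathcal{L}$ is $g$-orthogonal to $T\mathcal{L}$ it is also orthogonal to $T\mathcal{K}$, hence to $TM$, so $a=0$); rescaling $X$ by $|g(X,X)|^{-1/2}$ I may assume $g(X,X)=\epsilon\in\{-1,+1\}$, constant on the connected manifold $M$. This rescaling preserves the orbits of $\mathcal{L}$, so after a reparametrization of the interval (still denoted $I$) the field $X$ remains a uniform global generator whose flow $\phi$ is well-defined and onto on $I\times\mathcal{K}_0$. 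Because $\mathcal{L}$ is totally geodesic and $g(X,X)$ is constant, $\nabla_X X$ is simultaneously tangent to the one-dimensional $\mathcal{L}$ and $g$-orthogonal to $X$, hence $\nabla_X X=0$. I then define $\Phi\colon I\times\mathcal{K}_0\to M$, $\Phi(s,q)=\phi_s(q)$; it is onto by hypothesis and a local diffeomorphism since $X=\diff\Phi(\partial_s)$ is transverse to $\mathcal{K}$.

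Next I would identify the pulled-back metric. A short computation shows the flow preserves $T\mathcal{K}$: for $V\in\Gamma(T\mathcal{K})$ one has $g([X,V],X)=g(\nabla_X V,X)-g(\nabla_V X,X)=-g(V,\nabla_X X)-\tfrac12 V(g(X,X))=0$, so $[X,V]\perp X$, whence $[X,V]\in\Gamma(T\mathcal{K})$ and $\diff\phi_s(T_q\mathcal{K}_0)\subset T\mathcal{K}$; consequently the mixed terms $\Phi^*g(\partial_s,w)=g(X,\diff\phi_s w)$ vanish. For the fiber block, exactly as for (\ref{eq:scale factor}) I set $\eta_q(s)=g_{\phi_s(q)}(\diff\phi_s w_1,\diff\phi_s w_2)=(\phi_s^*g)_q(w_1,w_2)$; total umbilicity of $\mathcal{K}$ gives $g(\nabla_V W,X)=\theta\,g(V,W)$ for a function $\theta$, hence $(L_X g)(V,W)=g(\nabla_V X,W)+g(\nabla_W X,V)=-2\theta\,g(V,W)$ on $T\mathcal{K}$, so $\eta_q'(s)=-2\theta(\phi_s(q))\,\eta_q(s)$ and $\eta_q(s)=f(s,q)^2\,h_q(w_1,w_2)$ with $f(s,q)=\exp\!\big(-\int_0^s\theta(\phi_l(q))\,\diff l\big)>0$ and $h:=g|_{\mathcal{K}_0}$. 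Together with $\Phi^*g(\partial_s,\partial_s)=g(X,X)=\epsilon$, this yields $\Phi^*g=\epsilon\,\diff s\otimes\diff s+f^2\,\pi_{\mathcal{K}_0}^*h$, i.e.\ $\Phi$ is a local isometry onto the semi-Riemannian twisted product $I\times_f\mathcal{K}_0$.

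Finally, the main obstacle is upgrading $\Phi$ to a global isometry, and this is precisely where simple connectivity must enter, substituting for the completeness assumption of \cite{Ponge1993}. I would consider the one-form $\omega:=\epsilon\,g(X,\cdot)$, satisfying $\omega(X)=1$ and $\omega|_{T\mathcal{K}}=0$. Using $\nabla_X X=0$, the integrability of $\mathcal{K}$, and $[X,V]\in\Gamma(T\mathcal{K})$, one checks $\diff\omega=0$ on the blocks $(V,W)$ and $(X,V)$ with $V,W\in\Gamma(T\mathcal{K})$, so $\omega$ is closed. As $M$ is simply connected, the Poincaré Lemma gives $\omega=\diff T$ for some $T\in C^\infty(M)$; since $\diff T|_{\mathcal{K}_0}=\omega|_{\mathcal{K}_0}=0$ and $\mathcal{K}_0$ is connected, $T$ is constant on $\mathcal{K}_0$, normalized to $0$. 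Then $\diff(T\circ\Phi)=\Phi^*\omega=\diff s$, so $T\circ\Phi(s,q)=s$. This forces injectivity: if $\Phi(s_1,q_1)=\Phi(s_2,q_2)$ then $s_1=s_2$, and then $\phi_{s_1}(q_1)=\phi_{s_1}(q_2)$ gives $q_1=q_2$ because each time-$s$ flow map is a diffeomorphism. A bijective local isometry is an isometry, so $(M,g)$ is isometric to $I\times_f\mathcal{K}_0$, as claimed. I expect this last step—constructing the global time function $T$ to exclude self-intersections of the flow—to be the crux, with the closedness of $\omega$ (a consequence of the total geodesy of $\mathcal{L}$) together with simple connectivity doing the essential work.
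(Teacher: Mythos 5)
Your proof is correct in substance and is exactly the adaptation of Ponge--Reckziegel that the paper has in mind: the paper offers no argument for this lemma beyond the remark that the proof of \cite[Th. 1]{Ponge1993} remains valid when leaf completeness is replaced by the uniform global generator, and your three steps (the flow map $\Phi$ as a surjective local isometry, the conformal evolution of the fiber metric forced by umbilicity, and injectivity obtained from the closed one-form $\omega=\epsilon\, g(X,\cdot)$ together with simple connectivity) supply precisely the details being omitted. The one step you should tighten is the initial normalization: rescaling $X$ to unit length reparametrizes each orbit by a point-dependent amount, so it is not automatic that the rescaled field is still a uniform global generator over a single interval $I$ with the flow onto from $I\times\mathcal{K}_0$; either justify this, add $|g(X,X)|=1$ as a hypothesis, or observe that in the only place the lemma is used (Theorem \ref{theor:semi-Riemannian isometry}) the generator $Z$ already satisfies $\overline{g}_{\epsilon,\varphi}(Z,Z)=\epsilon$, so no rescaling is needed.
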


The above lemma allows us to prove
\begin{theor}\label{theor:semi-Riemannian isometry}
    Let $\spacetime$ be a simply connected Galilean spacetime 
    and let $K\in \Gamma(TM)$ be a Galilean torqued vector field. Given a smooth function $\varphi\in C^\infty(M)$ and $\epsilon\in \{-1,+1\}$, if the field of observers $Z\coloneq K/\Omega(K)$ is a uniform global generator with associated interval $I\subseteq \R$ and transverse leaf $\f$ in the foliation induced by $\Omega$, then
    the semi-Riemannian manifold
    \[
    (M,\overline{g}_{\epsilon,\varphi}), \quad \text{with} \quad \overline{g}_{\epsilon,\varphi} = \epsilon \Omega \otimes \Omega + e^{2\varphi} g(P^Z\cdot, P^Z \cdot)\,,
    \]
    is isometric to the (semi-Riemannian) twisted product
    \[
    (I \times \f, \overline{g}), \quad \text{with} \quad \overline{g} = \epsilon \, \pi_I^* \dt^2 +  f^2 \pi_{\f}^* \big(g|_{\f}\big)\,,
    \]
    where $g|_{\f}$ is the restriction of $g$ to $\f$ and $f$ is a positive smooth function on $I\times \f$. 

    Moreover, $Z$ is a spatially conformal vector field of $\overline{g}_{\epsilon,\varphi}$. 
    
\end{theor}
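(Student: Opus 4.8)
The plan is to realize $(M,\overline{g}_{\epsilon,\varphi})$ as a semi-Riemannian manifold carrying two orthogonal foliations to which the modified Ponge splitting, Lemma \ref{lem:Ponge modified}, applies. Let $\mathcal{L}$ be the one-dimensional foliation by the integral curves of $Z$, and let $\mathcal{K}$ be the foliation by the spacelike leaves of the distribution $\An$; the latter is a genuine foliation because $\diff\Omega=0$ for a symmetric Galilean connection. First I would record the elementary orthogonality facts: since $P^Z Z=0$ and $P^Z V=V$ for $V\in\Gamma(\An)$, one gets $\overline{g}_{\epsilon,\varphi}(Z,V)=0$ and $\overline{g}_{\epsilon,\varphi}(Z,Z)=\epsilon\,\Omega(Z)^2=\epsilon$, so $\mathcal{L}$ and $\mathcal{K}$ are $\overline{g}_{\epsilon,\varphi}$-orthogonal and $Z$ is a unit (timelike or spacelike) normal. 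By hypothesis $Z$ is a uniform global generator of $\mathcal{L}$ with associated interval $I$ and transverse leaf $\f$, so the structural hypotheses of Lemma \ref{lem:Ponge modified} are in place once the two curvature conditions are verified.

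Next I would check that $\mathcal{L}$ is totally geodesic. Applying the connection formula \eqref{eq:semi-Riemannian connection} with $X=Y=Z$ and using $P^Z Z=0$, every term except $\nabla_Z Z-\Omega(Z)^2\mathcal{G}^Z$ vanishes; since $\mathcal{G}^Z=\nabla_Z Z$ by definition of the gravitational field and $\Omega(Z)=1$, this reduces to $\overline{\nabla}_Z Z=\nabla_Z Z-\mathcal{G}^Z=0$. (This is automatic from the construction of $\overline{g}_{\epsilon,\varphi}$, the $\mathcal{G}^Z$-term being designed precisely to cancel $\nabla_Z Z$.) Thus the integral curves of $Z$ are $\overline{g}_{\epsilon,\varphi}$-geodesics and $\mathcal{L}$ is totally geodesic.

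Then I would verify that $\mathcal{K}$ is totally umbilic with unit normal $Z$. For $V,W\in\Gamma(\An)$ the second fundamental form is $\mathrm{II}(V,W)=\epsilon\,\overline{g}_{\epsilon,\varphi}(\overline{\nabla}_V W,Z)\,Z$. Using the identity for $\overline{g}_{\epsilon,\varphi}(\overline{\nabla}_X Y,Z)$ established at the end of the proof of Lemma \ref{lem:Koszul}, together with $\overline{g}_{\epsilon,\varphi}(\nabla_V W,Z)=\epsilon\,V(\Omega(W))=0$ and $g(P^Z V,P^Z W)=g(V,W)$, one finds
\[
\overline{g}_{\epsilon,\varphi}(\overline{\nabla}_V W,Z)=-\big(Z(\varphi)+\Tilde{\rho}\big)\,\overline{g}_{\epsilon,\varphi}(V,W),
\]
where $\Tilde{\rho}=\rho/\Omega(K)$ is the conformal factor of $Z$. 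Hence $\mathrm{II}(V,W)=-\epsilon\big(Z(\varphi)+\Tilde{\rho}\big)\overline{g}_{\epsilon,\varphi}(V,W)\,Z$ is pointwise proportional to the induced metric, so $\mathcal{K}$ is totally umbilic. With all hypotheses met, Lemma \ref{lem:Ponge modified} yields an isometry onto a twisted product $I\times_f\f$ in which the $I$-factor carries $\overline{g}_{\epsilon,\varphi}(Z,Z)\,\dt^2=\epsilon\,\dt^2$ and the fiber carries $f^2\,g|_{\f}$, the conformal weight $e^{2\varphi}$ along $\f$ being absorbed into $f$ (so that $f|_{t=0}=e^{\varphi|_{\f}}$), which is exactly the stated metric $\overline{g}$.

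Finally, for the ``moreover'' assertion I would compute $\overline{\nabla}_V Z$ for $V\in\Gamma(\An)$. From \eqref{eq:semi-Riemannian connection} with $X=V$, $Y=Z$ and $P^Z Z=0$, only $\nabla_V Z+\diff\varphi(Z)P^Z V$ survives, and \eqref{eq:connection on Z} gives $\nabla_V Z=\Tilde{\rho}\,V$, so $\overline{\nabla}_V Z=(\Tilde{\rho}+Z(\varphi))\,V$. Feeding this into the Levi-Civita identity $(L_Z\overline{g}_{\epsilon,\varphi})(V,W)=\overline{g}_{\epsilon,\varphi}(\overline{\nabla}_V Z,W)+\overline{g}_{\epsilon,\varphi}(V,\overline{\nabla}_W Z)$ gives $(L_Z\overline{g}_{\epsilon,\varphi})(V,W)=2(\Tilde{\rho}+Z(\varphi))\,\overline{g}_{\epsilon,\varphi}(V,W)$, so $Z$ is spatially conformal for $\overline{g}_{\epsilon,\varphi}$. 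The main obstacle I anticipate is not any single computation but ensuring that Lemma \ref{lem:Ponge modified} applies verbatim: that the umbilicity factor $-(Z(\varphi)+\Tilde{\rho})$ is globally well-defined along each leaf, and that the identification of the twisting function $f$ with the flow-scaling $\exp\!\big(\int_0^s\Tilde{\rho}(\phi_l)\,\diff l\big)$ from the proof of Theorem \ref{theor:local structure torqued} is consistent across the whole transverse leaf $\f$.
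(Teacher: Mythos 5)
Your proof is correct and takes essentially the same route as the paper: use Lemma \ref{lem:Koszul} to show the $Z$-foliation is totally geodesic ($\overline{\nabla}_Z Z=0$) and the $\An$-foliation totally umbilic with umbilicity factor $-(Z(\varphi)+\Tilde{\rho})$, apply Lemma \ref{lem:Ponge modified}, and compute $\overline{\nabla}_V Z=(\Tilde{\rho}+Z(\varphi))V$ for the spatial conformality claim. The only differences are cosmetic: you spell out the orthogonality of the two foliations and carry the $\epsilon$ in the second fundamental form explicitly, both of which the paper leaves implicit.
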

\begin{proof}
    Note first that we are in the hypothesis of Lemma \ref{lem:Koszul} with a conformal factor $\Tilde{\rho}=\rho/\Omega(K)$ (cf. equation (\ref{eq:connection on Z})).

    The induced foliation by $Z$ is parameterized by its integral curves $\phi_p(t)$, for $t\in I$ and $p\in \f$. From equation (\ref{eq:semi-Riemannian connection}) and using $\nabla_Z Z=0$, we obtain $\overline{\nabla}_Z Z = 0$. Therefore, the integral curves $\phi_p(t)$ are geodesics for both $\nabla$ and $\overline{\nabla}$, that is, the associated foliation is totally geodesic.
    
    Moreover, for all vector fields $V,W\in \Gamma(\An)$, we have
    \begin{equation*}
    \begin{split}
        2 \overline{g}_{\epsilon,\varphi}(\overline{\nabla}_V W, Z) ={}& -2\overline{g}_{\epsilon,\varphi}(W,\overline{\nabla}_V Z)\\
        ={}& - Z \overline{g}_{\epsilon,\varphi} (V,W) - \overline{g}_{\epsilon,\varphi}(V,[W,Z]) + \overline{g}_{\epsilon,\varphi}(W,[Z,V]) \\
        ={}& e^{2\varphi}\big[ -2 Z(\varphi) g(V,W) - g(V,\nabla_W Z) - g(\nabla_V Z, W) \big] \\
        ={}& -2\big( Z(\varphi) + \tfrac{\rho}{\Omega(K)}\big) e^{2\varphi} g(V,W)\\
        ={}& -2\big( Z(\varphi) + \tfrac{\rho}{\Omega(K)}\big) \overline{g}_{\epsilon,\varphi}(V,W)\,.
    \end{split}   
    \end{equation*}
    Then, $\overline{II}(V,W) = -\tfrac{K(\varphi) + \rho}{\Omega(K)}\, \overline{g}_{\epsilon,\varphi}(V,W) Z$; hence, the leaves of the foliation induced by $\An$ are totally umbilic for the semi-Riemannian metric $\overline{g}_{\epsilon,\varphi}$. From Lemma \ref{lem:Ponge modified}, $(M,\overline{g}_{\epsilon, \varphi})$ is isometric to a (semi-Riemmanian) twisted product $I\times_f \f$.

    Furthermore, from equation (\ref{eq:semi-Riemannian connection}) we have 
    \[
        \overline{\nabla}_V Z = \tfrac{1}{\Omega(K)} (\rho + K(\varphi)) V\,,
    \]
    for all spacelike vector fields $V$.
    Then,
    \[
    \begin{aligned}
        L_Z \overline{g}_{\epsilon,\varphi} (V,W) ={}& \overline{g}_{\epsilon,\varphi}(\overline{\nabla}_V Z,W) + \overline{g}_{\epsilon,\varphi}(V, \overline{\nabla}_W Z)\\
        ={}& \tfrac{2}{\Omega(K)} (\rho + K(\varphi)) \overline{g}_{\epsilon,\varphi} (V,W)\,,
    \end{aligned}
    \]
    for all spacelike vector fields $V,W$, that is, $Z$ is spatially conformal for $\overline{g}_{\epsilon,\varphi}$.

\end{proof}

Finally, we are in the position to obtain a converse for Proposition \ref{pro:torqued vector field in GT}. 
\begin{theor}\label{theor:splitting global}
    Let $\spacetime$ be a simply connected Galilean spacetime. 
    It admits a global decomposition as a GT spacetime if and only if there exists a Galilean torqued vector field $K\in \Gamma(TM)$ such that the associated field of observers, $Z\coloneq K/\Omega(K)$, is a uniform global generator relative to the foliation defined by $\An$. 
\end{theor}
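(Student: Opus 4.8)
The plan is to prove the two implications separately, using Theorem \ref{theor:semi-Riemannian isometry} to carry the weight of the nontrivial direction and Proposition \ref{pro:torqued vector field in GT} for its converse. For the \emph{necessity} direction, suppose $\spacetime$ already decomposes globally as a GT spacetime $M = I\times F$. I would take $K = f\pt$; Proposition \ref{pro:torqued vector field in GT} guarantees this is a Galilean torqued vector field, and since $\Omega = \diff\pi_I$ gives $\Omega(K) = f$, the associated field of observers is $Z = K/\Omega(K) = \pt$. Its maximal flow is $(s,q)\mapsto(s+s_0,q)$, which is well-defined and onto on $I\times(\{0\}\times F)$, where $\{0\}\times F$ is a leaf of the foliation induced by $\An$ transverse to $\pt$. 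This exhibits $Z$ as a uniform global generator relative to that foliation.

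For the \emph{sufficiency} direction I would invoke Theorem \ref{theor:semi-Riemannian isometry} with the choice $\varphi\equiv 0$ (and either sign $\epsilon$). Since $Z = K/\Omega(K)$ is assumed to be a uniform global generator and $M$ is simply connected, the theorem produces a global isometry $\Phi\colon (I\times\f,\overline g)\to (M,\overline g_{\epsilon,0})$ onto a semi-Riemannian twisted product, with twist function $f$ given by equation (\ref{eq:scale factor}) (note that with $\varphi\equiv 0$ the umbilic factor reduces to $\rho/\Omega(K)$, matching the scale factor of the local structure theorem). The remaining task is then purely to verify that this diffeomorphism is \emph{Galilean}, i.e. that it transports the GT data $(\diff\pi_I,\, f^2\pi_\f^*(g|_\f),\, \nabla^{GT})$ to $(\Omega,g,\nabla)$.

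The verification proceeds exactly as the global counterpart of Theorem \ref{theor:local structure torqued}. By construction $\Phi$ sends $\pt$ to the geodesic generator $Z$ and $T\f$ to $\An$; since $\Omega$ is intrinsically recovered from $Z$ by $\Omega(Z) = 1$ and $\Omega|_{\An} = 0$, one obtains $\Phi^*\Omega = \diff\pi_I$. The spacelike part of $\overline g_{\epsilon,0}$ restricts to $g$ on $\An$, while on the product side it is $f^2\pi_\f^*(g|_\f)$, so $\Phi^* g = f^2\pi_\f^* h$ with $h = g|_\f$. Finally, Remark \ref{rem:torqued} yields $\rot(Z) = 0$ and $\nabla_Z Z = 0$, that is, the Coriolis form $\omega_Z$ and the gravitational field $\mathcal{G}^Z$ both vanish; these are precisely the data $(0,0)$ that determine a unique symmetric Galilean connection by \cite[Cor. 28]{Bernal}, namely the GT connection. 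Hence $\nabla$ is $\Phi$-related to $\nabla^{GT}$, and $\Phi$ is a Galilean diffeomorphism.

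The hard part is entirely global. The local structure theorem already produces the Galilean identification near each point via the flow of $Z$; the obstacle is to promote this to a \emph{single} diffeomorphism $I\times\f\to M$. Surjectivity is exactly the uniform-global-generator hypothesis, while injectivity (the flow neither self-intersecting nor wrapping) is where simple connectedness enters, through the adapted Ponge--Reckziegel splitting of Lemma \ref{lem:Ponge modified} underlying Theorem \ref{theor:semi-Riemannian isometry}. A secondary subtlety I would flag is that $\overline g_{\epsilon,\varphi}$ is an auxiliary \emph{non-degenerate} metric introduced solely to apply the Riemannian-type splitting; one must confirm that the resulting isometry respects the genuinely degenerate Galilean data, which holds because the timelike direction, the spacelike distribution, and the connection are all encoded intrinsically through $Z$, $\An$, and the pair $(\mathcal{G}^Z,\omega_Z)$.
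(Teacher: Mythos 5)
Your proposal follows essentially the same route as the paper: necessity via Proposition \ref{pro:torqued vector field in GT} with $K=f\pt$, and sufficiency by feeding the uniform-global-generator hypothesis into Theorem \ref{theor:semi-Riemannian isometry} (with $\varphi\equiv 0$) and then upgrading the resulting semi-Riemannian isometry to a Galilean diffeomorphism. The overall structure and the identification of where globality and simple connectedness enter are correct.

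The one step you pass over too quickly is the assertion that ``by construction $\Phi$ sends $\pt$ to the geodesic generator $Z$.'' The isometry supplied by Lemma \ref{lem:Ponge modified} only identifies the two foliations; it does not a priori match the flow parameter of $Z$ with the coordinate $t=\pi_I$. The paper spends a nontrivial part of the proof on exactly this: writing $\Psi(\phi_q(s))=(\beta(s),q')$, using preservation of the (unit) norm of $Z$ for $\overline g_{-1,0}$ to conclude $(\beta')^2=1$, composing with the reflection $t\mapsto -t$ in the case $\beta'=-1$, and then invoking simple connectedness and $\diff\Omega=0$ to produce an absolute time $T$ with $t\circ\Psi=T$, whence $\Psi^*\dt=\Omega$. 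Your alternative derivation of $\Phi^*\Omega=\diff\pi_I$ from $\Omega(Z)=1$ and $\Omega|_{\An}=0$ is fine, but it presupposes $\diff\Phi(\pt)=Z$ exactly, which is precisely what the $(\beta')^2=1$ argument establishes; without it the twist function in the target would be misparametrized. This is a fixable omission rather than a wrong approach, but it should be supplied. (A cosmetic point: the facts $\rot(Z)=0$ and $\nabla_Z Z=0$ are derived from equation (\ref{eq:connection on Z}) in the proof of Theorem \ref{theor:local structure torqued}, not in Remark \ref{rem:torqued}, which only records that $Z$ is spatially conformally Leibnizian.)
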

\begin{proof}
    From Proposition \ref{pro:torqued vector field in GT}, the necessary condition is obvious.
    Now, if the flow of $Z$ is well-defined and onto on $I\times \f$, for $I\subset \R$ an open interval and $\f$ a leaf of the foliation induced by $\Omega$, Theorem \ref{theor:semi-Riemannian isometry} states that there exists a semi-Riemannian isometry map 
    $$
    \Psi\colon \big(M,\overline{g}_{-1,0}\big) \To \big(I\times \f, - \dt^2 + f^2 \, g|_{\f}\big)\,,
    $$
    where we take $\epsilon=-1$ and $\varphi=0$. 
    Consider $\Psi$ as a map between Galilean spacetimes,
    $$
    \Psi\colon \big(M,\Omega,g,\nabla\big) \To \big(I\times \f, \dt, f^2\, g|_{\f},\nabla^0\big)\,,
    $$ 
    where $\nabla^0$ is the corresponding GT connection. It suffices to prove that $\Psi$ is a Galilean diffeomorphism.
    
    First, $\Psi$ maps the foliation induced by $Z$ to the canonical foliation induced by $\pt$ in $I\times \f$, with $t=\pi_I \colon I \times \f \to I$. Consequently, if $\phi_q(s)$ is the flow of $Z$ for $q\in \f$, we have $$\Psi(\phi_q(s)) = (\beta(s),q'),\quad \forall s\in I\,,$$ 
    for certain $q'\in \f$ and a smooth function $\beta\in C^\infty (I)$. Since $\Psi$ preserves the causal character of vector fields for the semi-Riemannian structures, $\diff \Psi(Z)=\frac{\diff}{\ds}(\Psi \circ \phi_q)$ must be timelike unitary. Thus, $(\beta')^2 = 1$. 
    
    On the one hand, if $\beta'=-1$, we may redefine the isometry map as follows,
    $$\widetilde{\Psi}\colon M \to I \times \f, \quad p\longmapsto \widetilde{\Psi}(p) = \big(-\psi_1(p), \psi_2(p)\big)\,,$$
    for all $p\in M$, where $\psi_i$ are the component functions of $\Psi$, $\Psi(p) = \big(\psi_1(p), \psi_2(p)\big)$ with $\psi_1(p)\in I$ and $\psi_2(p)\in \f$. Hence, the orientation of the isometry $\Psi$ is reversed and the previous argument gives $\diff \widetilde{\Psi} (Z) = \widetilde{\beta'}=1$.
    
    Now, if $\beta'=1$, we have 
    $$(\Psi \circ \phi_q)(s) = (s+c,q'),\quad \forall s\in I\,,$$ 
    with $c\in I$, and $\diff \Psi(Z) = \pt$. The manifold $M$ is simply connected and $\diff \Omega =0$, then there exists a function $T\in C^\infty(M)$, determined except for a constant, such that $\Omega=\diff T$. It turns out $(t\circ \Psi)(\phi_q(s)) = s+c$  and  $T(\phi_q(s)) = s+\Tilde{c}$, with $\Tilde{c}\in \R$. If necessary, redefine $T$ via a translation to get $t\circ \Psi = T$. As a consequence, $\Psi^* \dt = \diff T$. Using $\Psi$ preserves the semi-Riemannian structures, from Theorem \ref{theor:semi-Riemannian isometry} we obtain $\Psi^* (f^2 g|_{\f}) = g$.

    Note that $\diff \Psi(Z) = \pt$ and take into account 
    $$
    \diff\Psi(P^Z X) = \diff\Psi(X) - \Omega(X) \diff\Psi(Z) = P^{\pt} \diff\Psi(X)\,,
    $$
    it is straightforward that formula 'à la Koszul' (\ref{eq:Koszul}) gives 
    \[
        2 g|_\f (P^{\pt} \nabla^0_{\diff \Psi(X)} \diff \Psi(Y), \diff\Psi(V)) = 2 g(P^Z \nabla_X Y, V)\,,
    \]
    for all $X,Y\in \Gamma(TM)$ and all $V\in \Gamma(\An)$. Therefore $\nabla$ and $\nabla^0$ are $\Psi-$related.
\end{proof}

When the spacelike leaves are compact, we can assure the existence of a  global splitting.
\begin{theor}\label{theor:splitting global compact}
    Let $\spacetime$ be a simply connected Galilean spacetime. If there exists a Galilean torqued vector field $K\in \Gamma(TM)$ and the leaves of the distribution induced by $\An$ are compact, then $M$ is a GT spacetime. 
\end{theor}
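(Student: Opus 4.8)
The plan is to reduce this statement to Theorem \ref{theor:splitting global}: it suffices to show that, under the compactness hypothesis, the field of observers $Z\coloneq K/\Omega(K)$ associated with the torqued vector field $K$ is automatically a uniform global generator relative to the foliation induced by $\An$. Once this is established, Theorem \ref{theor:splitting global} yields the global GT decomposition at once, since $M$ is simply connected and $K$ is a Galilean torqued vector field by hypothesis.

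First I would fix the kinematic framework. Because $\nabla$ is a symmetric Galilean connection, $\diff\Omega=0$ by \eqref{eq:torsion and clock}, and as $M$ is simply connected the Poincaré Lemma provides a global absolute time $T\in C^\infty(M)$ with $\Omega=\diff T$. Since $\Omega(Z)=1$ we get $Z(T)=1$, so $T$ increases at unit rate along the integral curves of $Z$; hence the local flow $\phi_s$ sends the spacelike leaf $\f_a=\{T=a\}$ into $\f_{a+s}=\{T=a+s\}$, and each $\phi_s$ is a local diffeomorphism transverse to the leaves (as $Z(T)=1\neq 0$, $Z$ is never tangent to a leaf). Fixing one spacelike leaf $\f\coloneq\f_0$ and letting $I\subseteq\R$ denote the image $T(M)$ (an interval, $M$ being connected), the goal becomes to show that the flow map $\Psi(s,q)=\phi_s(q)$ is well defined on all of $I\times\f$ and onto $M$.

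Next I would run an open--closed argument in the time variable, and this is the step I expect to be the genuine obstacle. Write $J\subseteq I$ for the set of $s$ such that $\phi_s$ is defined on the whole of $\f$. Compactness of $\f$ gives, through the usual flow-box covering, a uniform existence time, so $J$ is a nonempty open interval about $0$. Closedness of $J$ is the delicate point, and here I would exploit that \emph{every} leaf is compact: for a boundary value $s_*\in I$ approached by $s_n\in J$, the target leaf $\f_{s_*}$ is compact, so the flow of $Z$ admits a uniform backward existence time $\delta>0$ on $\f_{s_*}$ and $\phi_{-r}\colon\f_{s_*}\to\f_{s_*-r}$ is a diffeomorphism for $r\in[0,\delta]$. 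Choosing $s_n$ with $s_*-s_n<\delta$ and inverting shows $\phi_{s_*-s_n}$ is defined on the entire leaf $\f_{s_n}$; composing with $\phi_{s_n}\colon\f\to\f_{s_n}$ (defined since $s_n\in J$) yields $\phi_{s_*}=\phi_{s_*-s_n}\circ\phi_{s_n}$ on all of $\f$, i.e. $s_*\in J$. Equivalently, the escape lemma \cite[Lem. 1.56]{ONeill1983} applies once this uniform flow-box estimate is in hand, as it prevents any integral curve of $Z$ from escaping before time $s_*$. Either way $J$ is open and closed in the connected set $I$, whence $J=I$. I would also record that each $\phi_s\colon\f\to\f_s$ is then onto: its image is open (local diffeomorphism between equidimensional leaves) and compact, hence closed, in the connected leaf $\f_s$.

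Finally, surjectivity of $\Psi\colon I\times\f\to M$ follows from connectedness of $M$. The image is open because $\Psi$ is a local diffeomorphism ($\diff\Psi(\pt)=Z$ is transverse to the leaves, and $\diff\phi_s$ is injective on $T\f$), and it is closed because if $\phi_{s_k}(q_k)\to p$ then $s_k=T(\phi_{s_k}(q_k))\to T(p)\in I$, while compactness of $\f$ lets $q_k$ subconverge to some $q\in\f$, so $p=\Psi(T(p),q)$ lies in the image. Thus $\Psi$ is well defined and onto on $I\times\f$, so $Z$ is a uniform global generator with associated interval $I$ and transverse leaf $\f$. Applying Theorem \ref{theor:splitting global} then exhibits $M$ globally as a GT spacetime, completing the proof. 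Throughout, the only substantial work is the flow-completeness argument of the third paragraph; the remaining steps are a direct transcription of the hypotheses into the language of Theorem \ref{theor:splitting global}.
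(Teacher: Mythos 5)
Your proposal is correct and follows essentially the same route as the paper: reduce to showing that $Z=K/\Omega(K)$ is a uniform global generator, use compactness of the leaves to get uniform existence times and to extend the flow across any putative finite endpoint, establish surjectivity by an open--closed argument in the connected manifold $M$, and then invoke Theorem \ref{theor:splitting global}. The only differences are cosmetic bookkeeping (an open--closed argument in the time parameter and a sequential-compactness argument for closedness of the image, versus the paper's contradiction with the maximality of the interval $I$).
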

\begin{proof}
    It is enough to show that the vector field $Z=\frac{K}{\Omega(K)}$ is a uniform global generator. Indeed, let $\Phi\colon \mathcal{D} \to M$ be the maximal local flow of $Z$. 
    
    For any $p\in \f$, we have $I_p\times U_p\subset \mathcal{D}$, with $I_p\subset \R$ an open interval and $U_p$ an open set in $\f$. Since $\f$ is compact, $\f = \bigcup_{i=1}^k U_i$, $k\in \mathbb{N}$. Then, $\Phi$ is well-defined on some domain $I\times \f$, for $I\subset \R$ maximal interval. 
    
    In the case $I=\R$, the flow is complete and the proof is ended using Theorem \ref{theor:splitting global}. Suppose now $I=]a,b[$, with $b<+\infty$. Given $q\in \f$ and $\epsilon>0$, if $\Phi(\cdot, q)$ is defined on $]a,b+\epsilon[$, then there exists $\delta>0$ such that $]-\delta, \delta[ \times \f_{\Phi(b,q)}\subset \mathcal{D}$. Therefore, the flow can be extended, 
    \begin{equation*}
        \overline{\Phi}(t, q)=\left\{\begin{array}{lcc}
        \Phi(t, q) & \text { if } \quad t \in(a, b-\delta), \\
        \Phi\big(t-b+\tfrac{\delta}{2}, \Phi(b-\tfrac{\delta}{2}, q)\big) & \text { if } \quad t \in(b-\delta, b+\delta) .
        \end{array}\right.
    \end{equation*}
    This is a contradiction.
        
    Finally, we prove that $\Phi\colon I\times \f\to M$ is onto. Suppose again by contradiction, that there exists $q\in M\setminus \Phi(I\times \f)$ and consider the maximal interval $J\subset \R$ where $\Phi\colon J\times \f_q\to M$ is defined. If $\Phi(I\times \f) \cap \Phi(J\times \f_q)$ is not empty, the flow could be extended towards $I$. Therefore, since $I$ is maximal, $\Phi(I\times \f)$ is open and closed in a connected manifold $M$. Hence, $\Phi(I\times \f)=M$, and the proof is ended using Theorem \ref{theor:splitting global}.
\end{proof}

\section*{Acknowledgements}
The three authors are partially supported by Spanish MICINN
project PID2021-126217NB-I00.

\section*{Data availability}

Not applicable.


\bibliographystyle{plain}
\bibliography{references}

\end{document}